\crefname{algocf}{mech.}{mechs.}
\Crefname{algocf}{Mechanism}{Mechanisms}
\title{Timely Information from Prediction Markets}
\author{Grant Schoenebeck}\authornote{Grant Schoenebeck is pleased to acknowledge the support of the National Science Foundation under grants NSF 1618187 and 2007256.}
\affiliation{
  \institution{University of Michigan, Ann Arbor}
  }
\email{schoeneb@umich.edu}
\author{Chenkai Yu}
\affiliation{
  \institution{Tsinghua University, Beijing}
  }
\email{yck17@mails.tsinghua.edu.cn}
\author{Fang-Yi Yu}\authornote{Fang-Yi Yu is pleased to acknowledge the support of the National Science Foundation under grants NSF 1618187, NSF 2007256, CCF-1718549 and IIS-2007887.}
\affiliation{
  \institution{Harvard University}}
\email{fangyiyu@seas.harvard.edu}
\begin{abstract}
Prediction markets are powerful tools to elicit and aggregate beliefs from strategic agents. However, in current prediction markets, agents may exhaust the social welfare by competing to be the first to update the market. We initiate the study of the trade-off between how quickly information is aggregated by the market, and how much this information costs.
We design markets to aggregate timely information from strategic agents to maximize social welfare. To this end, the market must incentivize agents to invest the correct amount of effort to acquire information: quickly enough to be useful, but not faster (and more expensively) than necessary. The market also must ensure that agents report their information truthfully and on time.
We consider two settings: in the first, information is only valuable before a deadline; in the second, the value of information decreases as time passes.
We use both theorems and simulations to demonstrate the mechanisms.
\end{abstract}
\keywords{Prediction Markets; Social Welfare; Timely Reports; Costly Information}
\newcommand{\BibTeX}{\rm B\kern-.05em{\sc i\kern-.025em b}\kern-.08em\TeX}
\let\originalleft\left
\let\originalright\right
\renewcommand{\left}{\mathopen{}\mathclose\bgroup\originalleft}
\renewcommand{\right}{\aftergroup\egroup\originalright}
\newcommand{\inner}[2]{\left\langle#1,#2\right\rangle}
\newcommand{\bR}{\mathbb{R}}
\newcommand{\bP}[2][]{\Pr\ifthenelse{\isempty{#1}}{}{_{#1}}\left[#2\right]}
\newcommand{\bE}[2][]{\operatornamewithlimits{\mathbb E}\ifthenelse{\isempty{#1}}{}{_{#1}}\left[#2\right]}
\newcommand{\bI}[2][]{\operatorname{\mathbb I}\ifthenelse{\isempty{#1}}{}{_{#1}}\left[#2\right]}
\newcommand{\Var}[2][]{\mathbf{Var}\ifthenelse{\isempty{#1}}{}{_{#1}}\left[#2\right]}
\newcommand{\given}{\,\middle|\,}
\newcommand{\Exp}[1]{\mathrm{Exp}\mathopen{}\left(#1\right)\mathclose{}}
\newcommand{\cK}{\mathcal{K}}
\newcommand{\cX}{\mathcal{X}}
\newcommand{\cY}{\mathcal{Y}}
\newcommand{\defn}[1]{{\textbf{\textit{#1}}}}
\begin{document}


\pagestyle{fancy}
\fancyhead{}


\maketitle 


\section{Introduction}
Eliciting information about uncertain events is crucial for informed decision making.  Information is often acquired by individual agents. To achieve collective intelligence, the key problems are to elicit and aggregate \emph{timely} and truthful information from dispersed agents.

Prediction markets (PMs) allow agents to bet on the occurrence of future events: the outcome of a presidential election, the winner of a football game, etc.  Market prices reflect society's aggregated estimate of the outcome.  However, prediction markets tend to only pay the \emph{first} agent bringing information to the market.  For example, in the market of a tennis match, if one player wins a set, likely the price of the market will shift dramatically.   Because sportscasts are usually delayed by a few seconds, agents with real-time information (say with a confederate attending the match) reap the rewards by trading just seconds before others.  This provides little to no societal value as the price would be updated seconds later anyhow. 
This practice is widespread.  People, called ``courtsiders'', are paid to attend sports events and send back real-time information~\cite{cox2015tennis, dickson2015courtsiding}. The US Open ejected 20 spectators for courtsiding and banned them from future events~\cite{Rothenberg}.  While this information is useful to those profiting from it, it is hardly more than a waste of money and time for our society.  

The importance of speed in trading on information is underscored by the infamous \$300 million, 827 mile fiber optic cable from New York City to Chicago.  The cable reduced the round-trip latency to 13.1 milliseconds~\cite{wired} by offering a more direct route, bypassing Philadelphia, than the previous 1000 mile cable took with round trip latency 14.5 milliseconds.

We want to design systems that work well for society rather than promoting speed that is not needed but is merely a byproduct of the market design.  Increasingly, businesses are seen not merely as profit-maximizing, but as responsible and responsive to their various stakeholders~\cite{roundtable}.



Maximizing welfare rather than profit has various economic motivations as well. To gain market share and maximize long-term revenue, a company may also want to benefit other companies that it makes deals with. For example, many sponsored search auctions maximize welfare instead of revenue~\cite{sponsored_search_auction}. Since profit is a lower bound of social welfare (when agents' utility is non-negative), higher welfare potentially leads to high profit. Welfare-maximizing auctions and profit-maximizing auctions are shown to be very close in terms of both welfare and efficiency~\cite{aggarwal2009efficiency, bulow1994auctions}. In our settings, though our mechanism maximizes social welfare, its profit is still high, illustrated in \Cref{fig:bad_welfare_m}.

Apart from social welfare concerns, another potential challenge with prediction markets is that agents may want to delay reporting their information to increase their rewards~\cite{azar2016should, chen2016informational, kong2018optimizing}.  We show that in the settings we study, this is still a problem for traditional prediction markets, and we resolve this problem in our mechanisms.

\subsection{Our Contribution}
Motivated by the above concerns, we answer the following question: How to aggregate \emph{timely} and truthful information to maximize social welfare? 

We formulate the process as a principal-agent problem.  The principal first suggests a contract which maps agents' reports to rewards.  Then each agent decides his hidden actions (how much effort and how to report) strategically to maximize his utility. The principal knows neither agents' actions nor the relation between agents' actions and quality of information.  This makes our problem different from those in standard contract theory.  To resolve this, we design two new markets to maximize social welfare under the following two settings.

In the single batch setting, a principal needs to decide by a particular deadline.  To maximize social welfare, the principal needs to incentivize agents to invest the correct cost.  Agents' costs are hidden, and they can misreport. For this setting, we propose the Fair Prediction Market (FPM, \Cref{alg:s1}), in which the expected reward for every truthful agent is the same.

In the sequential setting, the value of information decreases as time passes.  Besides dealing with agents' hidden costs and misreporting behaviors, the principal also needs to encourage timely reports. We propose the Marginal Value Prediction (MVP) Market (\Cref{alg:m1}), in which every agent is paid by his contribution to the value of information.

Compared to the traditional prediction market, our mechanisms have more desirable properties, as shown in \Cref{tab:contrib}.

\begin{table}[ht]
  \centering
  \caption{Comparing the traditional prediction market (PM) and our mechanisms. \label{tab:contrib}}
  \begin{tabular}{l c c c}
    \toprule[1pt]
                      & PM              & FPM              & MVP Market \\
    \midrule[0.7pt]
    Timing            & sequential      & single-batch     & sequential \\
    \midrule[0.3pt]
    Truthfulness      & \checkmark      & \checkmark       & \checkmark \\
    \midrule[0.3pt]
    Timeliness        &                 & N.A.             & \checkmark \\
    \midrule[0.3pt]
    Social Optimality &                 & \checkmark       & \checkmark \\
    \bottomrule[1pt]
  \end{tabular}
\end{table}

\subsection{Related Works}
One line of works studies when agents should report their signals in prediction markets~\cite{azar2016should, chen2016informational, kong2018optimizing, gao2013you}.
They find that whether information will be aggregated quickly depends on agents' information structure. Agents will delay reporting if their information is ``complementary'', and rush to report if it is ``substitutional''. Earlier in the finance literature, \cite{kyle1985continuous, holden1992long, foster1996strategic} analyze how private information is disseminated into real financial markets using different models. The market behavior depends on the numbers of insiders, noise traders, and market makers.

\citet{chakraborty2015market} find that when agents are risk-averse, the market scoring rule acts as an opinion pool. Agents' risk aversion avoids the issue that agents will always pull the market price toward their own belief without ever reaching a consensus. We do not have this issue because, in our model, agents believe that the signals of each other are useful and are willing to do Bayesian updates.

For costly information, if the effort level is binary, it's well-known that we can scale the reward to compensate for the cost of effort.  This approach encourages agents to invest effort and increases the liquidity of the prediction~\cite{nisan2007algorithmic}.  In our paper, we consider a more complicated setting: The effort level is continuous, and the relation between agents' actions and the quality of information is unknown.  \citet{azar2018computational} uses contract theory to delegate computation where acquiring data is costly, but assume the principal has some ability to verify the data.   Moreover, our goal is to maximize social welfare, not just collect accurate decisions.

\citet{budish2015high} investigate the continuous limit order book market and the high-frequency trading arms race.  They show examples that such arms races induce rents that harm the liquidity of the market.  They propose a frequent batch market which discretizes the time to mitigate the necessity to be first, and show that the above-mentioned example does not hold in their new markets.

\subsection{Outline.}
In \Cref{section:pre}, we provide some basic notations, assumptions, and definitions; frame the problem we want to solve; and describe how prediction markets work. In \Cref{section:limitation}, we show how prediction markets may fail to collect timely reports. In \Cref{section:s}, we propose \Cref{alg:s1} for the single-batch setting and show that it is truthful and maximizes the social welfare. In \Cref{section:m}, we propose \Cref{alg:m1} for the sequential setting and show that it is truthful, timely and maximizes social welfare.
At last, we present some concrete examples to compare our mechanism to prediction markets in \Cref{section:dyna}.

\section{Preliminaries} \label{section:pre}
There is a \defn{principal} and a set of \defn{agents} $\mathcal{N} = \{1, \dots, n\}$.\footnote{Throughout the paper, we use `she' and `he' for the principal and agents respectively.}  Let $\cY$ be the outcome space, $y \in \cY$ be the true outcome, and $Y$ be the random variable for the outcome.
For each agent $i\in \mathcal{N}$, let $X_i$ be the private information of agent $i$, and $\cX_i$ be the set of possible values of $X_i$ where $\cX_i$ is finite.  The principal wants to collect information from agents to better predict $y$, and her utility depends on the value of information.\footnote{In reality, it could be other people who value the information in the market, use it to do something outside the market, and then get utility based on the quality of the information. Without loss of generality, we simply aggregate all such utility into that of the principal.~\cite{dawid2005geometry,gneiting2007strictly}}

\subsection{Information Structure}
We assume agents' signals $X_1, \dots, X_n$ are i.i.d.\ conditioning on the outcome $Y$. Every agent knows the joint distribution $\bP{X_k, Y}$. 
The principal knows the prior of the outcome $\bP{Y}$,
but she may not know $\bP{X_k \given Y}$.

\begin{example}\label{ex:information}
  Consider a binary outcome space $\cY = \{0, 1\}$, and binary signal spaces $\cX_i = \{0, 1\}$ for all $i \in \mathcal{N}$. The prior of the outcome is given by $\Pr[Y = 1] = \alpha$.  There is a noise level $\beta \in [0,1/2)$ such that each signal is an independent noisy observation of the outcome $\Pr[X_i = Y] = 1 - \beta$ for all $i\in \mathcal{N}$ and $y\in \{0,1\}$.
  Suppose the principal wants to predict $Y$ and her utility is $1$ ($-1$) if correct (incorrect).
  Let $\alpha = 1/2$.  If she only knows the prior of the outcome, her expected utility is $0$. However, if all agents collectively provide their prediction $\bP{Y \given X_1, \dots, X_n}$, the principal's utility is greater than $0$. For example, if $n = 1$, her expected utility is $1-2\beta>0$.
\end{example}

\subsection{The Mechanism Design Problem}\label{section:pre_mechanismdesign}
Acquiring signals is costly for agents.  Hence, agents may not bother to invest effort and may misreport their information.  The principal needs to incentivize agents to invest some effort to acquire their signals, without observing how much effort they actually invest.  Formally, the mechanism has three stages:
\begin{enumerate}
  \item The principal publishes a contract which maps reports and the outcome to payments.  Agents accept (or refuse).
  \item Each agent chooses a hidden effort level and submits a report to the principal.
  \item The true outcome (that agents are guessing at) is revealed.  The principal rewards each agent according to the contract.
\end{enumerate}

Each agent's utility is his reward from the principal minus his effort level.  Agents are rational and maximize their expected utility based on their beliefs over future events.  The principal's utility is the value of information minus the rewards given to the agents.  The \defn{social welfare} is the total utility of all agents and the principal. 

We want to design mechanisms whose resulting social welfare equals that in the \emph{centralized} setting, where every agent's action is controlled by the principal. In particular, we hope that the agents invest the same amount of effort and report just the same as in the centralized setting. 

Throughout the paper, we are only looking for symmetric equilibrium. This is reasonable since agents' signals are i.i.d.\ and they have no a priori means to coordinate. However, it is important to know that a non-symmetric strategy profile can be better than a symmetric one in terms of social welfare. For instance, as shown in \Cref{fig:original}, social welfare may decrease as the number of agents increases, a phenomenon shared across a wide range of economic models: the tragedy of commons, the game of chicken, etc. In such cases, if we break the symmetry and only allow a restricted number of agents to participate, the social welfare will increase.

At each time during the whole process, for each agent, given the information he has and the strategies of other agents, he has a belief over all uncertainty --- including the randomness of the world and other agents' private information. His immediate strategy should maximize his expected utility given his current information. If a strategy profile satisfies the above property, it is called a \defn{perfect Bayesian equilibrium}.\footnote{Here is a more rigorous definition. In an extensive-form game, for each game state $h$, let $\pi^\sigma(h)$ denote its reach probability according to strategy profile $\sigma$. We simply use $\pi(h)$ if $\sigma$ is clear from the context. For each information set $I$, let $\pi(I) = \sum_{h \in I} \pi(h)$. Let $u^\sigma(h)$ denote $h$'s expected utility according to $\sigma$. For each information set $I$ with $\pi(I) > 0$, let $u(I) = \bE[h \in I]{u(h)} = \sum_{h \in I} u(h) \pi(h) / \pi(I)$. For a strategy profile $\sigma$, let $\sigma_{I \to s}$ denote the same strategy profile except the strategy at information set $I$ is changed to $s$. A strategy profile $\sigma = (\sigma_1, \dots, \sigma_n)$ is a \defn{perfect Bayesian equilibrium} if for every information set $I$ with $\pi(I) > 0$, for every strategy $s$ at $I$, $u^\sigma(I) \ge u^{\sigma_{I \to s}}(I)$.}

We say a mechanism is \defn{individually rational} if every agent has non-negative expected utility in every Nash equilibrium.

\subsection{Prediction Markets with Scoring Rules}
Scoring rules have a very long history~\cite{de1937prevision, brier1951verification, good1992rational, gneiting2007strictly}.  Market scoring rules were introduced by \citet{hanson2003combinatorial} to study prediction markets.
A \emph{scoring rule} for an outcome $y$ is a function $S:\Delta_\cY\times\cY\to \mathbb{R}$,\footnote{$\Delta_\cY$ is the set of all probability distributions over $\cY$.} so that $S(p, y)$ is the score assigned to a prediction $p$ when the true outcome realized is $Y = y$.  Formally,
\begin{definition}[Proper Scoring Rule]
  $S: \Delta_\cY \times \cY \to \bR$ is called a \defn{proper scoring rule} if for any $b, p \in \Delta_\cY$, \[\bE[Y \sim b]{S(b, Y)} \ge \bE[Y \sim b]{S(p, Y)}.\]
  A proper scoring rule is \emph{strict} if the equality holds only if $b = p$.
  In other words, reporting one's belief results in a higher score than reporting other distributions. Such a report is said to be \emph{truthful}.
\end{definition}

Prediction markets with a scoring rule proceed as follows: A public belief $p \in \Delta_\cY$ is maintained in the market. Initially, $p = p_0$. Agents can change the market belief arbitrarily, resulting in a belief sequence $p_0, p_1, p_2, \dots$. After the outcome is revealed, for every $k \ge 1$, the agent who changes $p_{k-1}$ to $p_k$ is rewarded (or charged, if negative) by \[S(p_k, y) - S(p_{k-1}, y),\] where $S$ is a proper scoring rule.

In prediction markets, if each agent is only allowed to report once (and they believe other agents are rational), then the reports will be truthful. In particular, each agent will report his posterior distribution using the market information as the prior and his private information as the evidence.

\begin{proposition} \label{prop:update}
  If signals are independent conditioned on the outcome, then agents can perform a Bayesian update knowing only the current market belief (and the likelihood of their own signals).  In particular, they need not know the history of updates.
\end{proposition}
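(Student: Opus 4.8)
The plan is to apply Bayes' rule directly and use the conditional-independence assumption to show that the current market belief is a \emph{sufficient statistic} for the agent's update, so that the history drops out. First I would fix notation: suppose the updates that produced the current market belief $p$ were based on some subset $S \subseteq \mathcal{N}$ of signals, so that $p(y) = \bP{Y = y \given X_S = x_S}$, where $x_S$ denotes the realized values of $\{X_j\}_{j \in S}$. (Formally this is the inductive hypothesis that every earlier report was a truthful Bayesian update, so that the running market belief always equals the posterior given all previously reported signals.) An agent $i \notin S$ holding signal $X_i = x_i$ wants his true posterior $\bP{Y = y \given X_S = x_S, X_i = x_i}$, and the claim is that this equals a quantity depending only on $p$ and on his own likelihood $\bP{X_i \given Y}$.

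The heart of the argument is a short computation. Applying Bayes' rule,
\[
\bP{Y = y \given X_S = x_S, X_i = x_i}
= \frac{\bP{X_i = x_i \given Y = y, X_S = x_S}\, \bP{Y = y \given X_S = x_S}}{\bP{X_i = x_i \given X_S = x_S}}.
\]
Next I would invoke conditional independence: since $X_1, \dots, X_n$ are i.i.d.\ given $Y$, the likelihood of agent $i$'s signal does not depend on the other signals once $Y$ is fixed, i.e.\ $\bP{X_i = x_i \given Y = y, X_S = x_S} = \bP{X_i = x_i \given Y = y}$. Substituting $p(y) = \bP{Y = y \given X_S = x_S}$ and normalizing over $y$ yields
\[
\bP{Y = y \given X_S = x_S, X_i = x_i}
= \frac{\bP{X_i = x_i \given Y = y}\, p(y)}{\sum_{y'} \bP{X_i = x_i \given Y = y'}\, p(y')}.
\]
The right-hand side references only $p$ and the agent's own likelihood; the history $x_S$, and even the identity of $S$, has vanished, which is exactly the conclusion.

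These steps are essentially routine once the assumptions are in place, so this is an easy proposition; the only real subtlety — and the part I would be most careful about — is justifying that the market belief $p$ genuinely equals the posterior $\bP{Y \given X_S}$ rather than some arbitrary distribution. This is where the inductive framing earns its keep: if every agent before $i$ applied exactly this update rule (treating the then-current belief as prior and multiplying by their own likelihood), then by induction the successive updates telescope to the correct joint posterior given all reported signals, with normalization handled automatically at each step. Granting this, the conditional-independence reduction above shows the update is simultaneously correct and history-free, completing the argument.
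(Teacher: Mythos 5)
Your proof is correct and follows essentially the same route as the paper: an induction maintaining that the market belief equals the posterior $\bP{Y \given X_S}$ over reported signals, followed by a Bayes'-rule computation in which conditional independence (here in the form $\bP{X_i \given Y, X_S} = \bP{X_i \given Y}$) makes the history drop out, leaving the normalized product of the market belief and the agent's own likelihood. The paper phrases the same computation with proportionality ($p_k \propto p_{k-1}\bP{X_k \given Y} \propto \bP{Y \given X_1,\dots,X_k}$) rather than carrying the normalizing denominator explicitly, but the two arguments are identical in substance.
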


\section{Limitations of Prediction Markets} \label{section:limitation}
In this section, we show by examples how the original form of prediction markets may perform undesirably: agents may 1) invest too much effort, which decreases the social welfare, and 2) intentionally delay their reports.

\subsection{Inflated Effort and Poor Welfare}\label{section:rush}

Current theories of prediction markets do not consider agents' effort to discover signals: How much effort should an agent invest? For some easy, accurate information, it would be a waste of resources if everyone invests a lot. 

We model the cost of information by an \defn{access function} $F: \bR_{\ge 0} \to [0, 1]$ that maps from an effort level (the cost that an agent spends) to the probability of getting signals. Each agent $i$ decides his effort level $c_i$ and then obtains his signal $X_i$ with probability $F(c_i)$. We assume the access function is the same for every agent as common knowledge but unknown to the principal.

Suppose agents' signals are structured as in \Cref{ex:information} with $\alpha = 1/2$ and $\beta = 0$, i.e., all $X_i$'s are identical to the outcome and are exact substitutes to each other. Also, only the first agent who changes the belief will get one unit of reward, and others will get zero.
In this case, every agent wants to be the first. If multiple agents get signals, we assume each of them receives the reward with equal probability since they are symmetric.

Suppose the principal's value of information is described by a proper scoring rule $S$. She needs to choose a proper scoring rule $\tilde{S}$ for the prediction market. Unfortunately, she is unable to maximize the social welfare because she doesn't know the access function $F$ and the information structure $\bP{X \given Y}$ (characterized by $\beta$ in this case). We will see in this section that, if she chooses $\tilde{S} = S$ for prediction markets, the social welfare could be very poor. This is in stark contrast to the mechanisms we propose later in this paper, where the social welfare is maximized when $\tilde{S} = S$, without the need to know anything about $F$ or $\bP{X \given Y}$. Our results are summarized in \Cref{fig:original}.

\begin{figure}[ht]
  \centering
  \begin{subfigure}{0.49\hsize}
    \includegraphics[width=\textwidth]{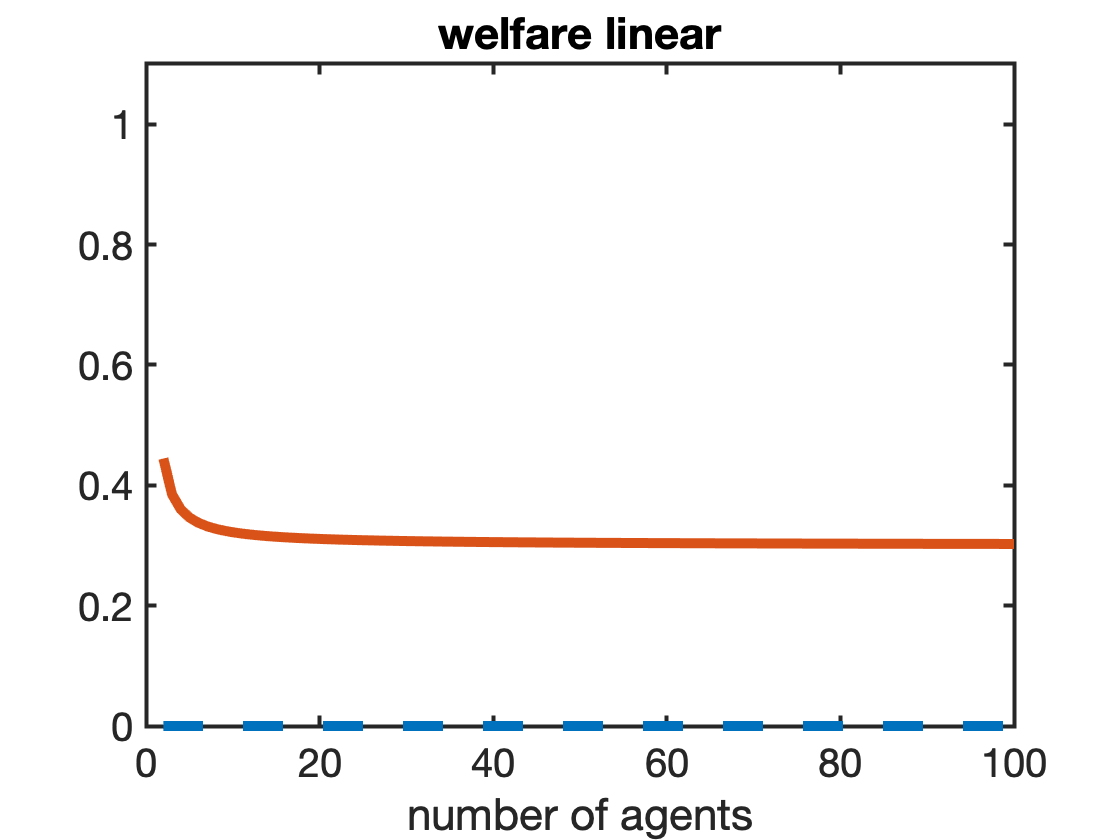}
  \end{subfigure}
  \begin{subfigure}{0.49\hsize}
    \includegraphics[width=\textwidth]{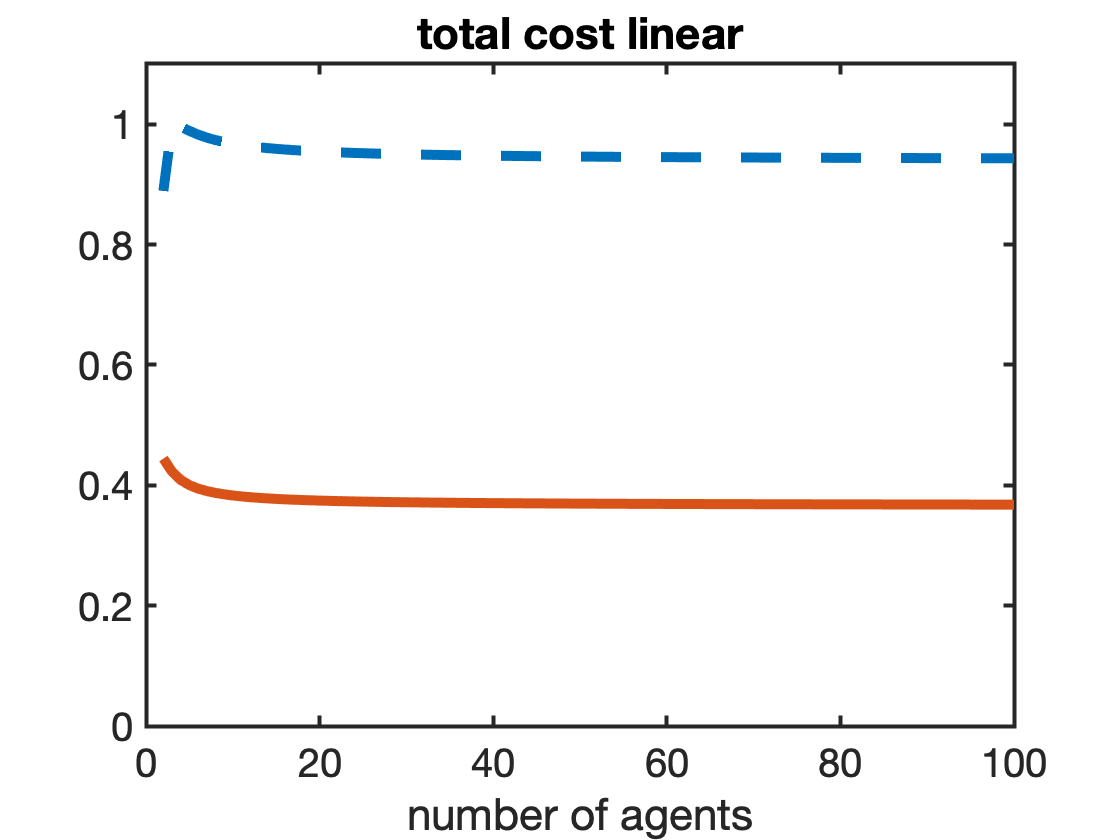}
  \end{subfigure}
  \vskip\baselineskip
  \begin{subfigure}{0.49\hsize}
    \includegraphics[width=\textwidth]{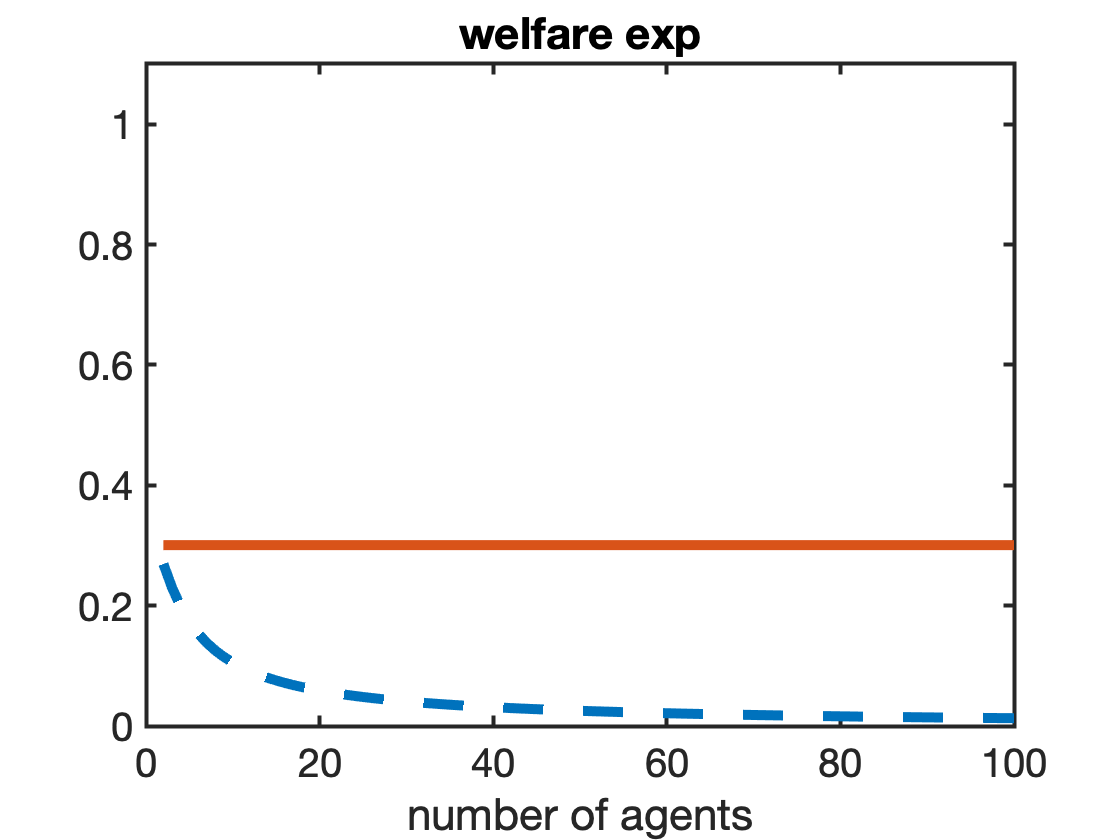}
  \end{subfigure}
  \begin{subfigure}{0.49\hsize}
    \includegraphics[width=\textwidth]{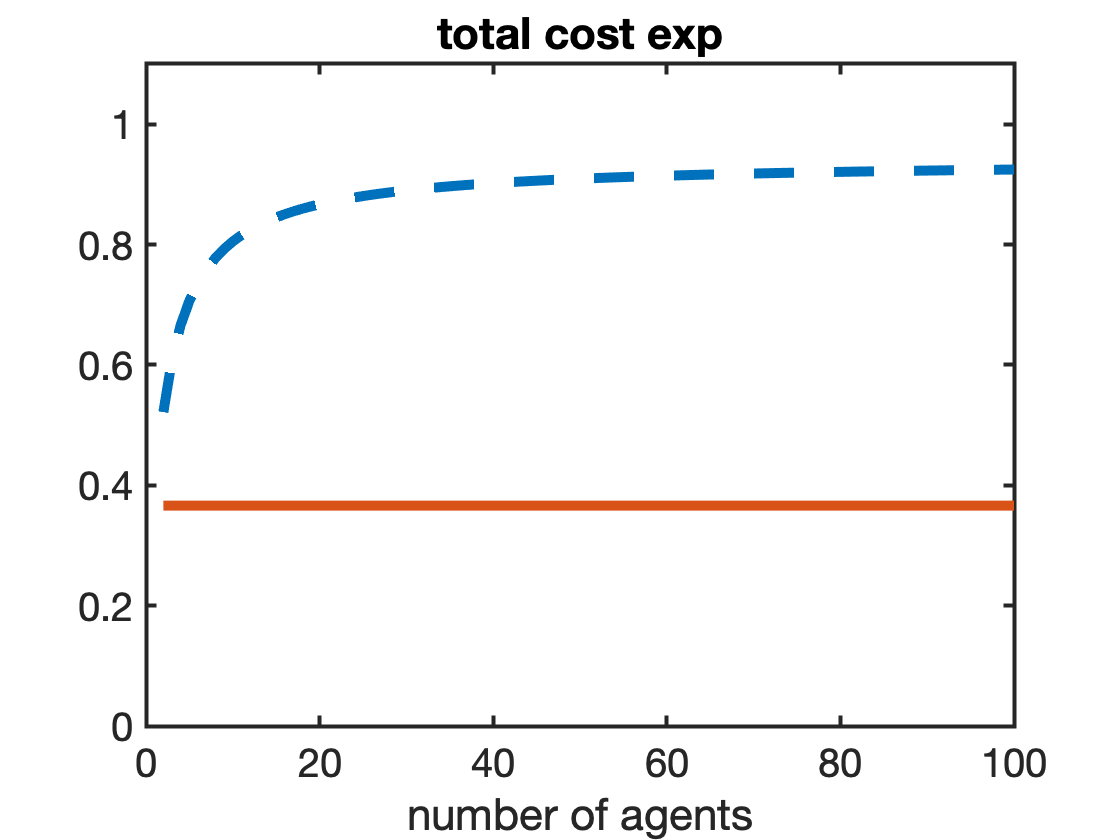}
  \end{subfigure}
  \caption{Red solid: Optimal (Centralized). Blue dashed: Prediction markets. Top: Linear access function $F(c) = 3c$ with $c \in [0,1/3]$. Bottom: Exponential access function $F(c) = 1 - e^{-3c}$ with $c \ge 0$. In both (linear and exponential) cases, the total costs (right column) are high in the strategic setting. In the linear case, agents spend all potential social value and have zero social welfare.  In the exponential case, the total cost increases as the number of agents increases.}
  \label{fig:original}
\end{figure}

We consider two examples of access functions and compute the social welfare of the market as the number of agents increases.  
When every agent's effort is $c$, social welfare is the total value minus cost. The value is $1$ when at least one agent derives his signal, and the total cost is $c n$, so in expectation,
\begin{equation} \label{eq:original_walfare}
  W^{(n)}(c) = \left(1-(1-F(c))^{n}\right) - c n.
\end{equation}

\begin{proposition}[Linear Access] \label{prop:linear}
  Given $n\ge 2$, $\lambda > 1$ and a linear access function is $F(c) = \lambda c$ for $c \in [0, 1 / \lambda]$, the optimal social welfare with $n$ agents is $W^{(n)}(c_{\rm opt}) = 1 - \lambda^{-n/(n-1)} - n \big(\lambda^{-1}-\lambda^{-n/(n-1)}\big)>0$ in the centralized setting, but the social welfare $W^{(n)}(c_{\rm self})$ is zero in the strategic setting.
\end{proposition}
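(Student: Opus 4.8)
My plan is to handle the two settings separately, since the centralized value is a one-variable optimization while the strategic value requires an equilibrium analysis.

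\textbf{Centralized setting.} Here the principal picks one common effort $c$ to maximize $W^{(n)}(c) = 1 - (1-\lambda c)^n - nc$ over $c \in [0,1/\lambda]$. First I would differentiate, $\frac{d}{dc}W^{(n)}(c) = n\lambda(1-\lambda c)^{n-1} - n$, and record concavity via $\frac{d^2}{dc^2}W^{(n)}(c) = -n(n-1)\lambda^2(1-\lambda c)^{n-2} < 0$ on $(0,1/\lambda)$, so the first-order condition pins down the unique maximizer. Solving $\lambda(1-\lambda c)^{n-1}=1$ gives $1-\lambda c = \lambda^{-1/(n-1)}$, i.e. $c_{\rm opt} = \lambda^{-1}-\lambda^{-n/(n-1)}$, which lies in $(0,1/\lambda)$ since $\lambda>1$. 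Substituting back (using $(1-\lambda c_{\rm opt})^n = \lambda^{-n/(n-1)}$) reproduces the claimed expression for $W^{(n)}(c_{\rm opt})$. For strict positivity I would set $t=\lambda^{-1/(n-1)}\in(0,1)$ to rewrite the welfare as $g(t) := 1 + (n-1)t^n - n t^{n-1}$ and observe $g(1)=0$ while $g'(t) = n(n-1)t^{n-2}(t-1) < 0$ on $(0,1)$; hence $g$ strictly decreases to $g(1)=0$, so $g(t)>0$.

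\textbf{Strategic setting.} Here I would compute a symmetric equilibrium. Fix a candidate common effort $c$ played by the other agents and let agent $i$ deviate to $c_i$. With $\beta=0$ every agent holding a signal learns $y$ exactly, so the single unit of reward is shared uniformly among all agents who obtain a signal. Conditioning on agent $i$ obtaining his signal (probability $\lambda c_i$), the number of \emph{other} agents with signals is $\mathrm{Binomial}(n-1,\lambda c)$, and the reward-sharing identity $\sum_{k=0}^{n-1}\binom{n-1}{k}(\lambda c)^k(1-\lambda c)^{n-1-k}\frac{1}{k+1} = \frac{1-(1-\lambda c)^n}{n\lambda c}$ (from the reindexing $\frac{1}{k+1}\binom{n-1}{k}=\frac{1}{n}\binom{n}{k+1}$) gives agent $i$'s expected reward as $\frac{c_i}{c}\cdot\frac{1-(1-\lambda c)^n}{n}$. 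His utility is therefore linear in his own effort, $u_i(c_i) = c_i\left(\frac{1-(1-\lambda c)^n}{nc}-1\right)$.

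The crucial observation is that this slope equals $W^{(n)}(c)/(nc)$ exactly, since $\frac{1-(1-\lambda c)^n}{nc}-1 = \frac{1-(1-\lambda c)^n - nc}{nc}$. Because $u_i$ is linear in $c_i$, an interior symmetric equilibrium forces the slope to vanish, i.e. $W^{(n)}(c_{\rm self})=0$: the welfare target and the equilibrium condition literally coincide. Equivalently, at such an equilibrium each agent earns $0$ while the principal collects value $1-(1-\lambda c_{\rm self})^n$ and pays out exactly the same amount, so the rent is fully dissipated and $W^{(n)}(c_{\rm self})=0$. To see the interior root is the relevant one, I would write $u=\lambda c$ and use $\frac{1-(1-u)^n}{u}=\sum_{j=0}^{n-1}(1-u)^j$, so the slope $\frac{\lambda}{n}\sum_{j=0}^{n-1}(1-u)^j - 1$ is strictly decreasing in $c$ from $\lambda-1>0$ down to $\lambda/n-1$, giving a unique interior zero exactly when $\lambda\le n$.

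The step I expect to be the main obstacle is precisely this last one. When $\lambda>n$ the slope stays positive on all of $[0,1/\lambda]$, the symmetric equilibrium sits at the boundary $c_{\rm self}=1/\lambda$, and full effort no longer dissipates the prize (welfare is $1-n/\lambda>0$). So the clean ``zero welfare'' conclusion is really a statement about the rent-dissipating regime $\lambda\le n$; I would either restrict to that regime or argue it is the economically relevant one (enough competing agents relative to how cheap signals are), and present the boundary case explicitly so that the characterization of $c_{\rm self}$ is complete.
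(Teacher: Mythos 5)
Your proposal follows the same route as the paper's own proof: the centralized part is the first-order condition for the concave $W^{(n)}$, and the strategic part exploits linearity of $u_i$ in own effort together with the reindexing identity $\frac{1}{k+1}\binom{n-1}{k}=\frac{1}{n}\binom{n}{k+1}$, so that the interior indifference condition coincides exactly with $1-(1-\lambda c_{\rm self})^n = n c_{\rm self}$, i.e.\ $W^{(n)}(c_{\rm self})=0$. However, you go beyond the paper in two places, and both additions are genuine improvements rather than redundancy. First, the paper asserts $W^{(n)}(c_{\rm opt})>0$ for every $n\ge 2$ but its proof only checks that the limit as $n\to\infty$ is positive; your substitution $t=\lambda^{-1/(n-1)}$, $g(t)=1+(n-1)t^n-nt^{n-1}$ with $g(1)=0$ and $g'(t)=n(n-1)t^{n-2}(t-1)<0$ on $(0,1)$ establishes strict positivity for every finite $n$, which is what the statement actually claims. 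Second, your boundary analysis is a real correction, not a pedantic caveat: the paper's proof implicitly assumes an interior (indifference) equilibrium exists --- it writes down the equation $\sum_{k=0}^{n-1}\frac{1}{k+1}\binom{n-1}{k}F(c_{\rm self})^k(1-F(c_{\rm self}))^{n-1-k}=\lambda^{-1}$ and stops --- but, as you show, the slope $\frac{\lambda}{n}\sum_{j=0}^{n-1}(1-\lambda c)^j-1$ only crosses zero on $[0,1/\lambda]$ when $\lambda\le n$; for $\lambda>n$ the unique symmetric equilibrium is the corner $c_{\rm self}=1/\lambda$, where $W^{(n)}(1/\lambda)=1-n/\lambda>0$, so the proposition's zero-welfare conclusion fails as stated. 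The clean statement is that full rent dissipation occurs exactly in the regime $\lambda\le n$, and your suggestion to make that restriction explicit is the right fix to both the statement and the proof.
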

\begin{proposition}[Exponential Access] \label{prop:exp}
  Given $n\ge 2$, $\lambda>1$ and an exponential access function $F(c) = 1 - e^{-\lambda c}$ for $c\ge 0$.  As $n\to \infty$, the optimal social welfare with $n$ agents is $W^{(n)}(c_{\rm opt}) = 1-(1+\ln \lambda )/\lambda>0$, but the social welfare in the strategic setting is $W^{(n)}(c_{\rm self}) = O(1/n)$.
\end{proposition}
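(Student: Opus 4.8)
The plan is to reduce everything to the single variable $q = e^{-\lambda c} = 1 - F(c) \in (0,1]$. For the centralized benchmark I would substitute $F(c) = 1 - e^{-\lambda c}$ into \eqref{eq:original_walfare}, giving the strictly concave objective $W^{(n)}(c) = 1 - e^{-n\lambda c} - nc$. Its derivative $n\lambda e^{-n\lambda c} - n$ vanishes at the interior point $c_{\rm opt} = (\ln\lambda)/(n\lambda)$ (interior precisely because $\lambda > 1$), and substituting back gives $W^{(n)}(c_{\rm opt}) = 1 - 1/\lambda - (\ln\lambda)/\lambda = 1 - (1+\ln\lambda)/\lambda$, which is in fact independent of $n$. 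Positivity is the elementary inequality $\lambda - 1 - \ln\lambda > 0$ for $\lambda > 1$, since this expression vanishes at $\lambda = 1$ and has positive derivative $1 - 1/\lambda$ afterward.

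For the strategic setting I first compute an agent's expected reward conditioned on obtaining a signal. With $\beta = 0$ every signal equals the outcome, so if $m$ of the other $n-1$ agents also get a signal the agent collects $1/(m+1)$; the identity $\tfrac{1}{m+1}\binom{n-1}{m} = \tfrac1n\binom{n}{m+1}$ and the binomial theorem turn this into $R(c) = \frac{1 - (1-F(c))^n}{nF(c)}$ when the others each exert effort $c$. An agent exerting $c_i$ then earns $F(c_i)R(c) - c_i$, which is concave in $c_i$ because $F$ is concave and $R(c)$ is a positive constant from his perspective; hence the symmetric equilibrium $c_{\rm self}$ is characterized by the first-order condition $F'(c)R(c) = 1$. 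Writing $q = e^{-\lambda c}$ so that $F'(c) = \lambda q$ and $(1-F(c))^n = q^n$, this condition becomes $\lambda\sum_{k=1}^n q^k = n$.

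The crux is to eliminate $1 - q^n$ between the equilibrium condition and the welfare. The first-order condition rearranges to $1 - q^n = n(1-q)/(\lambda q)$; substituting this and $c_{\rm self} = -(\ln q)/\lambda$ into $W^{(n)}(c_{\rm self}) = (1 - q^n) - n c_{\rm self}$ collapses the welfare to the closed form $W^{(n)}(c_{\rm self}) = \frac n\lambda\left(\frac1q - 1 + \ln q\right)$. The function $h(q) = \frac1q - 1 + \ln q$ has $h(1) = 0$ and $h'(q) = (q-1)/q^2 < 0$ on $(0,1)$, so $W > 0$, while its Taylor expansion at $q = 1$ is $h(q) = \tfrac12(1-q)^2 + O((1-q)^3)$. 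Everything therefore reduces to the estimate $1 - q_n = O(1/n)$, after which $W^{(n)}(c_{\rm self}) = \frac n\lambda\, O((1-q_n)^2) = O(1/n)$. To get $1 - q_n = O(1/n)$ I would use that $S_n(q) = \sum_{k=1}^n q^k$ is increasing with unique root $q_n$ of $S_n(q) = n/\lambda$: for any constant $C$ exceeding the positive root $a^*$ of $\lambda(1-e^{-a}) = a$, the bound $(1-C/n)^k \le e^{-Ck/n}$ and the geometric series give $S_n(1-C/n) \le \frac nC(1-e^{-C})(1+o(1)) < n/\lambda$ for large $n$, forcing $q_n \ge 1 - C/n$.

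The main obstacle is this last estimate: pinning down the implicit equilibrium $q_n$ sharply. The passage from the discrete sum $S_n$ to its continuum limit $\int_0^1 e^{-as}\,ds = (1 - e^{-a})/a$, and thus the emergence of the constant $a^*$ solving $\lambda(1-e^{-a}) = a$, has to be carried out with explicit inequalities rather than mere limits, because I need a genuine $O(1/n)$ rate: the approximations $(1-x)^k \approx e^{-xk}$ and $h(q) \approx \tfrac12(1-q)^2$ must retain error terms strong enough to survive multiplication by the prefactor $n/\lambda$. A secondary but necessary point is to justify existence and uniqueness of the symmetric equilibrium --- equivalently that $F'(c)R(c)$ is monotone and crosses $1$ exactly once --- which I expect to follow from monotonicity of $F'$ and of $R$ but should be verified rather than assumed.
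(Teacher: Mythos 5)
Your proposal is correct and follows essentially the same route as the paper: the same symmetric first-order condition $n(1-q)=\lambda q(1-q^n)$ for $q=e^{-\lambda c_{\rm self}}$, the same trick of substituting that condition back into the welfare to produce the cancellation (your $W=\tfrac{n}{\lambda}\left(\tfrac1q-1+\ln q\right)$ is the paper's $\tfrac{n}{\lambda}(e^{\lambda c_{\rm self}}-1)-nc_{\rm self}$), and the same expansion near $q=1$; in fact you supply a proof of the estimate $1-q_n=O(1/n)$ that the paper only asserts. One simplification: the step you flag as the main obstacle is a one-liner, since $q(1-q^n)\le 1$ in the equilibrium equation already gives $1-q_n\le\lambda/n$, and combining this with $\ln q\le q-1$ (so $h(q)\le(1-q)^2/q$) yields $W^{(n)}(c_{\rm self})\le\tfrac{n}{\lambda}\cdot\tfrac{(\lambda/n)^2}{q_n}=O(1/n)$ without the comparison of $S_n(1-C/n)$ against $n/\lambda$.
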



\subsection{Delayed Report} \label{section:delay}
Unfortunately, prediction markets do not guarantee timely reports, i.e., one may wait for others to report. This is undesirable when the value of information decays quickly.
Suppose agents' signals are structured as in \Cref{ex:information} with $\alpha = 0.02,\ \beta = 0.2$.  Consider the quadratic scoring rule, where we let $G(p) = \norm{p}_2^2$ and $S(p, y) = G(p) + \inner{\nabla G(p)}{\delta_y - p} = 2 p(y) - \norm{p}_2^2$. Contrary to the intuition, the marginal value of a report does not monotonically decrease. As shown in \Cref{fig:late}, the largest increase in the scoring rule is due to the third report, not the first or the second. Thus, an agent who believes he is likely to be the first will wait before reporting.
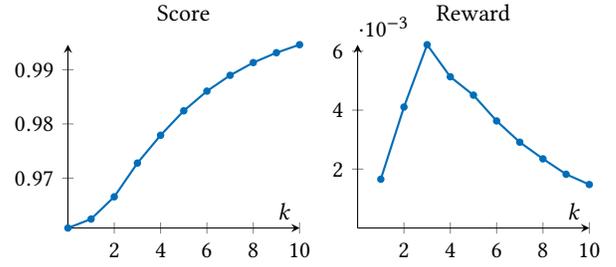
\begin{figure}[ht]
\centering
\begin{tikzpicture}
  \begin{axis}[
      width=0.55\hsize,
      title=Score,
      xlabel=$k$,
      axis lines=middle,
    ]
    \addplot[NavyBlue, thick, mark=*, mark size=1] coordinates {(0, 0.9608) (1, 0.962456) (2, 0.96656) (3, 0.972778) (4, 0.977909) (5, 0.982417) (6, 0.98605) (7, 0.988961) (8, 0.991309) (9, 0.993133) (10, 0.994609)};
  \end{axis}
\end{tikzpicture}
\begin{tikzpicture}
  \begin{axis}[
      width=0.55\hsize,
      title=Reward,
      xlabel=$k$,
      axis lines=middle,
      before end axis/.code={\addplot [draw=none, forget plot] coordinates {(0,0)};}
    ]
    \addplot[NavyBlue, thick, mark=*, mark size=1] coordinates {(1, 0.0016557) (2, 0.00410387) (3, 0.00621856) (4, 0.00513086) (5, 0.00450783) (6, 0.00363284) (7, 0.0029115) (8, 0.00234772) (9, 0.00182409) (10, 0.00147597)};
  \end{axis}
\end{tikzpicture}
\caption{Left: The expected score after $k$ reports. Right: The expected reward for the $k$-th report, which is the difference between two consecutive values in the left plot. Note that it's not decreasing!}
\label{fig:late}
\end{figure}

\section{Single Batch Model} \label{section:s}

We want to design new mechanisms to deal with the above issues. In particular, we assume the market belief is used by someone whose utility is the ``quality'' of the market belief minus the rewards she gives to the agents. Then, we maximize social welfare and encourage truthful and timely reports.
We first ignore the time factor and consider a simple case, where each agent $i$:
\begin{enumerate}
  \item chooses to invest $c_i \in \bR$ much effort, \label{s_step1}
  \item gets a signal $X_i$ with probability $F(c_i)$, and reports $b_i$ to the mechanism, \label{s_step3}
  \item receives some reward $r_i$ from the mechanism. \label{s_step4}
\end{enumerate}
We want to design a reward function (contract) from agents' reports to rewards in order to maximize the social welfare.  If $p$ is the aggregated belief from agents' reports, the value of information is represented as a (strictly) proper scoring rule $S(p, y)$, where $y$ is the outcome.  The principal's utility $U$, each agent $i$'s utility $u_i$, and the social welfare $W$, are given by:
$U = \bE{S(p, Y)} - \sum_i \bE{r_i},\ u_i = \bE{r_i} - c_i,\ W = \bE{S(p, Y)} - \sum_i c_i$.
Here the expectation is taken over all randomness (ex-ante), i.e., agents compute it based on the information in Stage~\ref{s_step1}. Note that given the information valuation $S$ and the information structure, assuming agents are truthful, the ex-ante social welfare only depends on the agents' effort $\bm c = (c_1, \ldots, c_n)$.  We call $\bm c^*$ an \emph{optimal effort profile} if
\begin{equation}\label{eq:righteffort1}
  \bm c^* \in \arg\max_{\bm c} \Big(\bE[p, Y]{S(p, Y)} - \sum_i c_i\Big),
\end{equation}
where $p$ is the Bayesian posterior of the outcome given all signals.

\subsection{Mechanism and Theorem}
To incentivize agents to invest the optimal effort, we want to design a mechanism that, given agents' reports, outputs an aggregated belief $p \in \Delta_\cY$ and a reward for each agent. This task is challenging for two reasons:
\begin{itemize}
  \item The agents' efforts $c_1, \dots, c_n$ and reports $b_1, \dots, b_n$ are decided by each agent individually.
  \item The joint distribution $\bP{X_k, Y}$ and the effort function $F(c)$ are not known to the principal, and thus na\"ively eliciting agents' signals $X_k$ does not work.
\end{itemize}


Our mechanism is shown in \Cref{alg:s1}.
Each agent $k$ is asked to report $b_{k, y} = \bP{X_k \given Y = y}$ for each $y \in \mathcal{Y}$ (one of them can be omitted).
If the mechanism knows $\frac{\bP{Y = y \given X_1, \dots, X_{k-1}}}{\bP{Y \neq y \given X_1, \dots, X_{k-1}}}$, then it is easy to updated it to $\frac{\bP{Y = y \given X_1, \dots, X_k}}{\bP{Y \neq y \given X_1, \dots, X_k}}$ given $b_{k, y}$ because
$\frac{\bP{Y = y \given X_1, \dots, X_k}}{\bP{Y \neq y \given X_1, \dots, X_k}} = \frac{\bP{Y = y \given X_1, \dots, X_{k-1}}}{\bP{Y \neq y \given X_1, \dots, X_{k-1}}} \frac{b_{k, y}}{1 - b_{k, y}}$,
which follows by applying Bayes to both the numerator and denominator of the first two fractions, and then using the fact that $X_k$ is conditionally independent of $X_1, \dots, X_{k-1}$.
As a result, we can update $\bP{Y = y \given X_1, \dots, X_{k-1}}$ to $\bP{Y = y \given X_1, \dots, X_k}$ because we can first compute $\frac{\bP{Y = y \given X_1, \dots, X_{k-1}}}{\bP{Y \neq y \given X_1, \dots, X_{k-1}}}$, use this and $b_{k, y}$ to compute  $\frac{\bP{Y = y \given X_1, \dots, X_k}}{\bP{Y \neq y \given X_1, \dots, X_k}}$, and then transform this back to  $\bP{Y = y \given X_1, \dots, X_k}$.
We succinctly denote the above process as $p_{k, y} = \mathit{Update}(p_{k-1, y}, b_{k, y})$ where
\begin{align} \label{eq:update}
  \mathit{Update}(p_{k, y}, b_{k, y}) = \dfrac{p_{k,y} b_{k,y}}{(1 - p_{k,y}) (1 - b_{k, y}) + p_{k,y} b_{k,y}}
\end{align}
and $p_{k, y} = \bP{Y = y \given X_1, \dots, X_k}$.

\begin{algorithm}[ht]
  \caption{Fair Prediction Market}
  \label{alg:s1}
  \DontPrintSemicolon
  \KwIn{a report profile $(b_1, \dots, b_n)$ where $b_k = (b_{k,1}, \dots, b_{k,d-1})$ is the information provided by agent $k$, describing what the agent $k$'s (claimed) values of $\bP{X_k \given Y = y}$ for each $y$. For those who do not obtain a signal, we assume their $b_{k,y} = \nicefrac{1}{2}$ for $y = 1, \dots, d - 1$, where $d = \abs{\cY}$.}
  \KwOut{the reward $r_k$ for each agent $k$, and the aggregated belief $p_n$}
  \For{$k = 1$ \KwTo $n$}{
    Let $(\pi_1, \dots, \pi_n)$ be a random permutation with $\pi_n = k$\;
    $p_0 \gets \bP{Y}$\;
    \For{$j = 1$ \KwTo $n$}{
      \For(\tcp*[f]{This loop goes through every $y \in \cY$ and update the corresponding entry of $p$ according to the information provided by agent $\pi_j$.}){$y = 1$ \KwTo $d - 1$}{
        $p_{j,y} \gets \mathit{Update}(p_{j-1,y}, b_{\pi_j,y})$ as defined in \eqref{eq:update}
      }
      $p_{j,d} \gets 1 - \sum_{y=1}^{d - 1} p_{j,y}$
    }
    $r_k \gets S(p_n, y^*) - S(p_{n-1}, y^*)$ \tcp*{$y^*$ is the true outcome}
  }
\end{algorithm}

A report is said to be \emph{truthful} if it results in a Bayesian update on the market belief, where the prior is the previous market belief, the posterior is the new market belief, and the evidence is the reporter's signal.
We pay each agent by his improvement on the market belief as if he were the last one to update.
This mechanism is ``fair'' in the sense that if everyone is truthful then everyone receives the same expected reward.
\begin{proposition} \label{prop:z}
  Agent $k$ makes a correct Bayesian update iff $b_{k,y} = \frac{\bP{X_k \given Y = y}}{\bP{X_k \given Y \ne y}}$ for all $y \in \cY$.
\end{proposition}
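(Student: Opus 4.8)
The plan is to prove both directions of the equivalence by comparing the update performed inside \Cref{alg:s1} against the exact Bayesian update, reducing everything to a single multiplicative relation on the posterior odds for each outcome $y$. Fix an agent $k$ and an outcome $y \in \cY$. By ``a correct Bayesian update'' I mean that, starting from the running belief $p_{j-1,y} = \bP{Y = y \given X_1, \dots, X_{j-1}}$, the step $p_{j,y} \gets \mathit{Update}(p_{j-1,y}, b_{k,y})$ reproduces the true posterior $\bP{Y = y \given X_1, \dots, X_j}$, and this should hold no matter what running belief the preceding reports have produced. The structural fact I would lean on, already recorded in the derivation preceding \eqref{eq:update} and in \Cref{prop:update}, is that under conditional independence the exact update acts multiplicatively on the odds: the true posterior odds equal the prior odds times the likelihood ratio $\bP{X_k \given Y = y}/\bP{X_k \given Y \ne y}$. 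The mechanism's update likewise acts multiplicatively on the odds, with a factor determined solely by the report $b_{k,y}$. Matching these two factors is the whole content of the claim.

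For the $(\Leftarrow)$ direction I would assume $b_{k,y} = \bP{X_k \given Y = y}/\bP{X_k \given Y \ne y}$ and substitute it into $\mathit{Update}$ as defined in \eqref{eq:update}, then check that the resulting $p_{j,y}$ is exactly the Bayesian posterior. Because the update is applied coordinate-by-coordinate over $y = 1, \dots, d-1$ and the last coordinate is closed off by $p_{j,d} \gets 1 - \sum_{y} p_{j,y}$, it suffices to verify the identity one outcome at a time; the normalization of the full vector then follows automatically. Telescoping over $j = 1, \dots, n$ shows that the running belief after all reports coincides with $\bP{Y \given X_1, \dots, X_n}$, so every intermediate update is correct.

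For the $(\Rightarrow)$ direction I would argue that if the mechanism's update agrees with Bayes for every reachable prior $p_{j-1,y}$, then the multiplicative factor that $b_{k,y}$ contributes to the odds is forced to equal the likelihood ratio for each $y$ separately; inverting the relation recovers $b_{k,y}$ uniquely and identifies it with $\bP{X_k \given Y = y}/\bP{X_k \given Y \ne y}$. The main obstacle is exactly this necessity step: I must use the quantifier ``for all priors'' (equivalently, agreement at a single generic, non-degenerate prior) to pin the report down to one value rather than to a relation that several reports might satisfy, and I must confirm that the functional of the conditional signal law that gets identified is precisely the likelihood ratio appearing in the statement and not some other reparametrization of it. Once the single-coordinate, single-agent case is settled, extending to all $d-1$ coordinates and to every agent $k$ (each treated as the last updater under the random permutation) is routine bookkeeping.
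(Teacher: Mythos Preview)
Your plan matches the paper's proof almost line for line: both rewrite \eqref{eq:update} as a multiplicative action on the odds $p_y/(1-p_y)$ and compare the resulting factor with the Bayesian likelihood ratio $\bP{X_k\given Y=y}/\bP{X_k\given Y\ne y}$, invoking conditional independence exactly as in the derivation preceding \eqref{eq:update}. The paper's appendix only spells out the ``if'' direction; your added necessity paragraph (the odds factor is pinned down once the update must agree with Bayes at a generic, non-degenerate prior) is a short supplement the paper leaves implicit.

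One warning for the execution step. From \eqref{eq:update} the factor that $\mathit{Update}$ applies to the prior odds is $b_{k,y}/(1-b_{k,y})$, not $b_{k,y}$ itself. Hence if you literally substitute the value displayed in the proposition you will \emph{not} recover the Bayesian posterior; the identity that actually works, and the one the paper's own proof uses, is
\[
\frac{b_{k,y}}{1-b_{k,y}}=\frac{\bP{X_k\given Y=y}}{\bP{X_k\given Y\ne y}}.
\]
(The appendix proof even carries a stray symbol $z_{k,y}$ that equals this ratio.) This is a typo in the stated proposition rather than a flaw in your strategy, but you will hit it at the ``substitute and check'' step, so resolve the discrepancy there instead of assuming the displayed formula is compatible with \eqref{eq:update}.
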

Note that if $p_{k,y}$ is known for $\abs{\cY} - 1$ different $y$'s, then the last one follows directly since their sum is 1, so each agent needs to report only $\abs{\cY} - 1$ values, the same number as in the original prediction market, where each agent reports a probability distribution.

\begin{theorem} \label{thm:s}
  Assume $\frac{\dd^2 F(c)}{\dd c^2} < 0$.\footnote{decreasing marginal benefit, a very common assumption in economics.}
  \Cref{alg:s1} is individually rational, and there exists a strict perfect Bayesian equilibrium $\bm{\sigma}$ in which the expected social welfare is maximized (over all symmetric strategy profiles), and $\bm \sigma$ satisfies the following properties:
  \begin{description}
    \item[Effort Optimality] The effort profile $\bm c$ is optimal (as in \eqref{eq:righteffort1}).
    \item[Truthfulness] Each agent makes a Bayesian update on the market belief.
  \end{description}
\end{theorem}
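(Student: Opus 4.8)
The plan is to split the incentive analysis into a report stage and an effort stage and to exploit the fact that the ``last updater'' reward equals each agent's marginal contribution to the expected score. First I would record a reward-decomposition lemma. Because the rule $\mathit{Update}$ in \eqref{eq:update} is additive in the log-odds, the market belief after a set of agents report is independent of the order in which they report --- this is exactly \Cref{prop:update} --- and an agent reporting $1/2$ leaves the belief unchanged. Hence, when agent $k$ is placed last, his realized reward is $S(p_N, y^*) - S(p_{N\setminus k}, y^*)$, where $p_N$ is the posterior conditioned on all obtained signals and $p_{N\setminus k}$ the posterior conditioned on all signals except $k$'s; in particular the random permutation changes neither belief. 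Taking expectations, $\bE{r_k} = V(\bm c) - V_{-k}$, where $V(\bm c) = \bE{S(p_N, Y)}$ and $V_{-k} = \bE{S(p_{N\setminus k}, Y)}$ does not depend on $c_k$.

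Truthfulness then follows at the report stage. Given that the other agents report truthfully, the on-path belief $p_{N\setminus k}$ is the correct posterior from their signals, and agent $k$'s report influences only $p_N = \mathit{Update}(p_{N\setminus k}, b_k)$ while $S(p_{N\setminus k}, y^*)$ is a constant from his viewpoint. Since $S$ is a strict proper scoring rule, $\bE[Y]{S(p_N, Y)}$ is uniquely maximized when $p_N$ equals agent $k$'s true posterior; by monotonicity of $\mathit{Update}$ in its second argument and \Cref{prop:z}, this is achieved exactly by the truthful likelihood-ratio report $b_{k,y} = \bP{X_k \given Y = y}/\bP{X_k \given Y \neq y}$ (and by reporting $1/2$ when no signal is obtained). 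This makes truthful reporting the strict best response at every information set of the report stage.

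For effort optimality I would write $V(\bm c) = B_k + F(c_k)(A_k - B_k)$, where $A_k, B_k$ are the expected scores with and without agent $k$'s signal, depend only on $\bm c_{-k}$, and satisfy $A_k \ge B_k$ by the standard monotonicity of the value of information for proper scoring rules. Then the agent's objective $u_k = F(c_k)(A_k - B_k) - c_k$ is strictly concave in $c_k$ under $F'' < 0$, so his best response is the unique solution of $F'(c_k)(A_k - B_k) = 1$. The crucial observation is that this coincides with the planner's symmetric first-order condition: differentiating $W(c) = V(c, \dots, c) - nc$ gives, by symmetry, $\frac{dW}{dc} = n\big(F'(c)(A_k(c) - B_k(c)) - 1\big)$. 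Consequently, if I take $c^\dagger$ to be a welfare-maximizing symmetric effort and let all agents play $c^\dagger$ with truthful reporting, the agent's unique best response to the others is again $c^\dagger$, so this profile is a strict perfect Bayesian equilibrium whose welfare is, by construction, the maximum over symmetric profiles; strictness comes from strict properness at the report stage and strict concavity at the effort stage. Running the argument from the social optimum to an equilibrium (rather than the reverse) avoids having to prove $W(c)$ globally concave, which matters because the marginal value of information need not be monotone (cf.\ \Cref{fig:late}). Individual rationality is then immediate: deviating to zero effort and reporting $1/2$ yields reward $0$ and cost $0$, hence utility $0$, so every Nash equilibrium gives each agent non-negative utility.

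The main obstacle I anticipate is the reward-decomposition and alignment step: making rigorous that the last-updater reward is precisely the marginal contribution $V(\bm c) - V_{-k}$ with $V_{-k}$ free of $c_k$, and that the induced private first-order condition is identical to the planner's. This hinges on order-independence of the belief updates and on the monotonicity $A_k \ge B_k$, after which the concavity supplied by $F'' < 0$ upgrades the matching stationarity into a genuine strict equilibrium that attains the symmetric welfare optimum.
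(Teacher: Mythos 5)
Your proposal is correct and follows essentially the same route as the paper's proof: truthfulness via strict properness of $S$ once the reward is recognized as the marginal contribution $S(p_N, y^*) - S(p_{N\setminus k}, y^*)$, effort optimality by matching the agent's first-order condition $F'(c_k)(A_k - B_k) = 1$ with the planner's symmetric first-order condition (your $A_k - B_k$ is exactly the paper's binomial sum $\sum_{k=0}^{n-1}\binom{n-1}{k}F(c)^k(1-F(c))^{n-k-1}(v_{k+1}-v_k)$, just written abstractly), and individual rationality via the zero-effort deviation. Your choice to argue from the symmetric welfare maximizer $c^\dagger$ back to equilibrium is a slightly cleaner packaging of the paper's observation that the two stationarity conditions coincide, and it also handles the corner case $c^\dagger = 0$ that the paper treats separately, but it is the same underlying argument.
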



\subsection{Proof Sketch}

In order to proof \Cref{thm:s}, we first show some lemmas.

\begin{lemma}[Truthfulness] \label{lemma:s_truthful}
  Every report $b_k$ will be truthful, assuming other reports are truthful. Any deviation will result in a strictly worse expected reward.
\end{lemma}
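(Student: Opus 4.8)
The plan is to show that, when every other agent reports truthfully, agent $k$'s expected reward is uniquely maximized by the truthful report characterized in \Cref{prop:z}. First I would observe that in the computation of $r_k$ the permutation always places $k$ last ($\pi_n = k$), so $p_{n-1}$ is formed from exactly the other $n-1$ agents' truthful reports. By the order-independence of Bayesian updating (\Cref{prop:update}), $p_{n-1}$ is a deterministic function of the other agents' realized signals alone and, crucially, does not depend on agent $k$'s own report $b_k$. Writing the reward as $S(p_n, Y) - S(p_{n-1}, Y)$ with $p_n = \mathit{Update}(p_{n-1}, b_k)$, the subtracted term $S(p_{n-1}, Y)$ is constant in $b_k$, so maximizing the expected reward is equivalent to maximizing $\bE{S(p_n, Y)}$.

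Next I would condition on a realization of the other agents' signals (equivalently, on a fixed interior value of $p_{n-1}$) together with agent $k$'s own signal $X_k$. Given this information, agent $k$'s belief over the outcome is the full posterior $q := \bP{Y \given X_1, \dots, X_n}$. By strict properness of $S$, the conditional expectation $\bE[Y]{S(p_n, Y) \given X_k, p_{n-1}}$ is maximized over $p_n$ uniquely at $p_n = q$. By the update identity \eqref{eq:update} and \Cref{prop:z}, reporting $b_{k,y} = \bP{X_k \given Y = y}/\bP{X_k \given Y \ne y}$ makes $\mathit{Update}(p_{n-1}, b_k) = q$ exactly, so the truthful report attains this maximizer. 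The decisive point is that this truthful report does not depend on $p_{n-1}$: it is a function of $X_k$ and the known information structure only, hence the same for every realization of the other agents' signals. Thus a single fixed report is pointwise optimal conditioned on each $p_{n-1}$, and it therefore also maximizes the unconditional expectation $\bE{S(p_n, Y)}$.

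For strictness, I would use that for each interior $p_{n-1}$ the map $b_k \mapsto \mathit{Update}(p_{n-1}, b_k)$ is injective, so any deviation $b_k' \ne b_k$ yields $p_n' \ne q$; strict properness then gives $\bE[Y]{S(p_n', Y) \given p_{n-1}} < \bE[Y]{S(q, Y) \given p_{n-1}}$, and since $p_{n-1}$ is interior with positive probability the loss survives taking expectations. I expect the main obstacle to be exactly this decoupling: a priori agent $k$ faces uncertainty about the others' reports, and hence about $p_{n-1}$, so one must rule out that some signal-dependent, history-hedging report beats the naive truthful one. \Cref{prop:update} resolves this by guaranteeing that the truthful likelihood-ratio report induces the correct posterior regardless of the prior market belief, which is precisely what lets a single report be optimal against every realization of $p_{n-1}$ simultaneously. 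The interiority needed for injectivity and for strict properness to bite is a minor point that follows from the prior $\bP{Y}$ and the likelihoods being bounded away from $0$ and $1$.
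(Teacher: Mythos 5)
Your proof is correct and takes essentially the same approach as the paper's: fix the truthful $p_{n-1}$ formed from the others' signals, use strict properness to conclude that the uniquely optimal $p_n$ is the full posterior, and invoke \Cref{prop:update} (and \Cref{prop:z}) to see that a single truthful likelihood-ratio report attains this maximizer for every possible realization of $p_{n-1}$ simultaneously. The paper's proof is just a terse version of this argument; your write-up fills in the decomposition of the reward, the conditioning step, and the injectivity needed for strictness.
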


\begin{lemma}[Effort Optimality] \label{lemma:s_effort}
  Assume $\frac{\dd^2 F(c)}{\dd c^2} < 0$. Agents are incentivized to invest the ``right'' amount of effort \eqref{eq:righteffort1} that maximizes the expected social welfare, assuming all reports are truthful.
\end{lemma}

\begin{lemma} \label{lemma:v_num}
  The expected score $\bE{S(p, Y)}$ of distribution $p$ only depends on the number of previous updates but not who have made updates, assuming all reports are truthful.
\end{lemma}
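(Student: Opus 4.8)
The plan is to reduce the statement to an exchangeability property of the conditionally i.i.d.\ signals. First I would observe that, under truthful reporting, \Cref{prop:update} together with \Cref{prop:z} guarantees that the $\mathit{Update}$ rule in \eqref{eq:update} reproduces the genuine Bayesian posterior: after the mechanism has incorporated the (truthful) reports of a set $T \subseteq \mathcal{N}$ of agents, in any order, the market belief equals $p_T = \bP{Y = \cdot \given (X_i)_{i \in T}}$. Two facts are worth isolating here. Bayesian conditioning on a set of signals does not depend on the order in which they are processed, which is exactly what makes the composition of $\mathit{Update}$ steps well defined and order-independent; and an agent without a signal reports $b_{k,y} = 1/2$, for which $\mathit{Update}(p, 1/2) = p$, so such a report leaves the belief unchanged. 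Consequently the belief after $j$ informative updates is a deterministic, permutation-symmetric function of the $j$ contributed signals, which I will write as $p = g_j(X_{i_1}, \dots, X_{i_j})$.

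Next I would note that $\bE{S(p, Y)}$ is a functional of the joint law of the pair $(p, Y)$ alone, so it suffices to show that this joint law depends on the contributing set $T$ only through $|T| = j$. The law of $(p, Y)$ is the pushforward of the law of $\big((X_{i_1}, \dots, X_{i_j}), Y\big)$ under the map $(g_j, \mathrm{id})$. By the conditional i.i.d.\ assumption, the joint distribution of $\big((X_{i_1}, \dots, X_{i_j}), Y\big)$ factors as $\bP{Y = y} \prod_{\ell=1}^{j} \bP{X_{i_\ell} = x_\ell \given Y = y}$, which is manifestly invariant under relabeling the agents and hence identical for every choice of $j$ distinct indices. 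Pushing this invariant law forward through $(g_j, \mathrm{id})$ yields a law of $(p, Y)$ that likewise depends only on $j$, and therefore $\bE{S(p, Y)}$ depends only on $j$.

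I expect the only delicate point to be pinning down the precise meaning of ``number of updates'' and matching it to the exchangeability argument: an update must be counted as a genuine, signal-bearing report rather than the no-op produced by an agent reporting $1/2$, and the claim is then that any two equinumerous subsets of signal-bearing agents induce the same expected score. Given the order-independence and symmetry established in the first step, this follows immediately, so the bulk of the work is simply invoking \Cref{prop:update,prop:z} to justify that truthful play produces the exact posterior, and then the one-line exchangeability computation above. No estimates are required; the content is entirely distributional.
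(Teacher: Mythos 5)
Your proof is correct and follows essentially the same route as the paper's: the paper disposes of the lemma in one line by appealing to the symmetry of agents (signals identically distributed conditional on $Y$), and your argument is exactly that exchangeability idea made rigorous — truthful play yields the order-independent Bayesian posterior $p_T = \bP{Y \given (X_i)_{i\in T}}$, whose joint law with $Y$ depends only on $\abs{T}$. Your added care about no-op reports ($\mathit{Update}(p,1/2)=p$) and the pushforward formulation is a welcome tightening of the paper's terse "the lemma simply follows," but it is the same underlying argument.
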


\begin{proof}[Proof of \Cref{thm:s}]
  Each agent $i$ makes two decisions, and his strategy can be written as $\sigma_i = (c_i, b_i)$, where $c_i$ is the effort he invests at the beginning and $b_i$ is his report. By \Cref{lemma:s_effort}, he will not deviate from $c_i$. By \Cref{lemma:s_truthful}, he will not deviate from $b_i$.
  
  The proof for individual rationality is simple: If in a Nash equilibrium, agent $i$ gets negative expected utility, then he can deviate to $c_i = 0$ and get zero utility. This means he is not in a Nash equilibrium. Thus in every Nash equilibrium, every agent has non-negative expected utility.
\end{proof}

Below is the proofs of the lemmas. \Cref{lemma:s_effort} is perhaps the most interesting among the three.

\begin{proof}[Proof of \Cref{lemma:s_truthful}]
  Assuming other reports are truthful, for avery agent $k$, we have $p_{n-1} = \bP{Y \given \text{all signals except agent $k$'s}}$. By the property of strictly proper scoring rule, his best strategy is to make $p_n = \bP{Y \given \text{all signals}}$ for every possible $p_{n-1}$. Any deviation will lower his reward. This is achievable due to \Cref{prop:update}.
\end{proof}

\begin{proof}[Proof of \Cref{lemma:s_effort}]
  Since agents are symmetric to each other, we look for a symmetric equilibrium, where every agent $i$ invests the same amount of effort $c_i = c$. Let $v_k = \bE{S(p_k, Y) - S(p_0, Y)}$, the expected increase of the score after $k$ updates.

  The expected social welfare is given by:
  \[W = \bE{S(p_k, Y)} - n c = \bE{v_k} + \bE{S(p_0, Y)} - n c.\]
  Setting the derivative to be zero, we have
  \begin{align}
    0 & = \frac{\dd W}{\dd c} = \frac{\dd}{\dd c} \bE{v_k} - n \notag \\
    & = \frac{\dd F(c)}{\dd c} \frac{\dd}{\dd F(c)} \sum_{k=0}^n \binom{n}{k} F(c)^k (1 - F(c))^{n-k} v_k - n \notag \\
    & = \frac{\dd F(c)}{\dd c} \sum_{k=0}^n \binom{n}{k} \Big(k F(c)^{k-1} (1 - F(c))^{n-k} \notag \\
    & \hspace{2cm} - (n - k) F(c)^k (1 - F(c))^{n-k-1}\Big) v_k - n \notag \\
    & = \frac{\dd F(c)}{\dd c} n \Bigg(\underbrace{\sum_{k=1}^n \binom{n - 1}{k - 1} F(c)^{k-1} (1 - F(c))^{n-k} v_k}_{\text{substitute $k$ by $k + 1$}} \notag \\
    & \hspace{2cm} - \sum_{k=0}^{n-1} \binom{n - 1}{k} F(c)^k (1 - F(c))^{n-k-1} v_k\Bigg) - n \notag \\
    & = n \frac{\dd F(c)}{\dd c} \sum_{k=0}^{n-1} \binom{n - 1}{k} F(c)^k (1 - F(c))^{n-k-1} (v_{k+1} - v_k) - n. \label{eq:sw1}
  \end{align}

  On the other hand, the expected utility of agent $i$ is
  \begin{align*}
    u_i
     & = \bP{\text{get signal}} \bE{\text{reward} \given \text{agent $i$ reports}} - \text{cost}            \\
     & = F(c_i) \bE[k, X_1, \dots, X_k, Y]{S(p_k, Y) - S(p_{k-1}, Y) \given \text{agent $i$ reports}} - c_i \\
     & = F(c_i) \bE[k]{v_k - v_{k - 1} \given \text{agent $i$ reports}} - c_i                               \\
     & = F(c_i) \sum_{k=0}^{n-1} \binom{n - 1}{k} F(c)^k (1 - F(c))^{n-k-1} (v_{k+1} - v_k) - c_i,
  \end{align*}
  where $c$ is the amount of effort invested by other agents. Because $F(c_i)$ is concave in $c_i$, $u_i$ is also concave in $c_i$. We also know that $F(c_i)$ is upper bounded (by 1), so $\lim_{c_i \to \infty} \dd F(c_i) / \dd c_i = 0$ and thus $\lim_{c_i \to \infty} \dd u_i / \dd c_i = -1$. Therefore, there is a unique $c_i$ that maximizes the utility, which is either (a) $c_i = 0$ or (b) the point with zero derivative. In case (b), we have
  \begin{equation}
    0 = \frac{\dd u_i}{\dd c_i} = \frac{\dd F(c_i)}{\dd c_i} \sum_{k=0}^{n-1} \binom{n - 1}{k} F(c)^k (1 - F(c))^{n-k-1} (v_{k+1} - v_k) - 1. \label{eq:br1}
  \end{equation}
  \Cref{eq:br1} describes how an agent's decision $c_i$ should best response to those of others $c$. In a symmetric equilibrium, $c_i = c$. Then, surprisingly, \Cref{eq:br1} becomes equivalent to \Cref{eq:sw1}. In case (a), both individually optimal and socially optimal solutions are $c_i = 0$.
  In other words, the distributed maximization of each agent's utility can result in the maximization of social welfare.
\end{proof}

\begin{proof}[Proof of \Cref{lemma:v_num}]
  Since signals are identically distributed and the effort function is the same for every agent, an agent is indistinguishable from another. The lemma simply follows.
\end{proof}

\section{Sequential Model} \label{section:m}
In this section, we consider a setting that involves time.
Its difference from the setting of the previous section is that signals are not received by agents immediately but will be eventually. Formally, before receiving signal $X_i$, each agent $i$ suffers from a latency $T_i$, which is a random variable with c.d.f.\ $F_{c}(t)$. For instance, $F_{c}(t) = 1 - e^{-\lambda c t}$ means $T_i \sim \Exp{\lambda c_i}$. Here, $F_c(t)$ --- a generalization of the access function $F(c)$ used in the previous section --- depends on time.  In summary, each agent $i\in \mathcal{N}$:
\begin{enumerate}
  \item chooses to invest $c_i \in \bR$ effort,
  \item obtains a signal $X_i$ at time $T_i\ge 0$ generated from  c.d.f $F_{c_i}(\cdot)$, decides a time $s_i\ge 0$ and a report $b_i$ to send to the mechanism at time $s_i$,
  \item receives some reward $r_i$ from the mechanism.
\end{enumerate}
We assume that as long as the agents invest non-zero effort, they always obtain their signals before the true outcome being revealed because this is far in the future.
The value of the information (market belief) evolves over time. Let $p(t) \in \Delta_\cY$ denote the market belief at time $t$. The value of a belief history $\{p(t)\}_{t>0}$ is defined as:
$V = \int_{t > 0} S(p(t), y) h(t) \dd t$,
where $y$ is the outcome, $S$ is a strictly proper scoring rule characterizing the quality of the market belief, and \defn{time value function} $h$ is a function characterizing how the value of information diminishes through time. For instance, $h(t) = \eta e^{-\eta t}$ means that the value of information decays exponentially. This would be appropriate if the principal needs to make decision at a random time $\tau \ge 0$ generated from an exponential distribution with parameter $\eta$.

We want to design a mechanism that takes agents' online reports as inputs, maintains a real-time market belief, and finally outputs the reward given to each agent. We also want this mechanism to be truthful, \emph{timely}, and social-welfare-maximizing. We say a mechanism is timely if every agent reports immediately after he gets a signal.
The principal's utility $U$, each agent $i$'s utility $u_i$, and the social welfare $W$, are given by:
$U = \bE{V} - \sum_i \bE{r_i},\ u_i = \bE{r_i} - c_i,\ W = \bE{V} - \sum_i c_i$.
Given proper scoring rule $S$ and the information structure, assuming the agents aggregate their information in a \emph{truthful} and \emph{timely} manner, the expected social welfare only depends on the agents' effort $\bm{c} = (c_1, \ldots, c_n)$.  In this section, we call $\bm{c}^*$ an \emph{optimal effort profile} if
\begin{equation}\label{eq:righteffort2}
  \bm{c}^* \in \arg\max_{\bm c} \Big(\bE[p, Y]{V} - \sum_i c_i\Big).
\end{equation}

\subsection{Mechanism and Theorem}

Besides the challenges involved in \Cref{section:s} --- including the hidden effort, the unknown information structure and effort function, and the potential manipulation of agents' reports --- we also need to deal with another complexity: An agent can choose any time to report, not necessarily just at the time he receives his signal, and even before it (i.e., $s_i < t_i$).
We restrict our focus to mechanisms where each agent can report only once. Note that this assumption exists in the previous literature, e.g., in the traditional prediction market, there would be no truthfulness guarantee without this assumption.
For this sequential setting, we propose \Cref{alg:m1}, which updates a market belief using agents' reports one by one. Agents report their information in the same structure as in \Cref{section:s}.
The mechanism also computes counterfactual market beliefs with one of the reports skipped. In particular, the counterfactual belief for agent $i$'s absence is what the market belief would be if agent $i$ does not report.
The reward for each agent depends on both the actual and counterfactual market beliefs.

\begin{algorithm}[htb]
  \caption{Marginal Value Prediction (MVP) Market}
  \label{alg:m1}
  \DontPrintSemicolon
  \KwIn{online~report~$b_j = (b_{j,1}, \dots, b_{j,d-1})$~from~each~agent~$j$}
  \KwOut{real-time market belief $p_k$, and the reward $r_i$ for each agent $i$}
  $p_0 \gets \bP{Y}$\;
  \For{$i = 1$ \KwTo $n$}{
    $\tilde p^i_0 \gets \bP{Y}$ \tcp*{$\tilde p^i$ is the counterfactual market belief for agent $i$'s absence}
  }
  \For{$k = 1$ \KwTo $n$}{
    wait until receive a report from an agent $j$ and denote it by $b_j$\;
    $t_j \gets$ current time\;
    \For(\tcp*[f]{Update the market belief with agent $j$'s report}){$y = 1$ \KwTo $d - 1$}{
      $p_{k,y} \gets \mathit{Update}(p_{k-1,y}, b_{j,y})$ as defined in \eqref{eq:update}
    }
    $p_{k,d} \gets 1 - \sum_{y=1}^{d-1} p_{k,y}$\;
    \For{$i = 1$ \KwTo $n$}{
      \eIf{$i = j$}{
        $\tilde{p}^i_k \gets \tilde{p}^i_{k-1}$
      }(\tcp*[f]{Update the counterfactual belief for $i$'s absence with $j$'s report}){
        \For{$y = 1$ \KwTo $d - 1$}{
          $\tilde p^i_{k,y} \gets \mathit{Update}(\tilde p^i_{k-1,y}, b_{j,y})$ as defined in \eqref{eq:update}
        }
        $\tilde p^i_{k,d} \gets 1 - \sum_{y=1}^{d-1} \tilde p^i_{k,y}$
      }
    }
  }
  \tcp{After the true outcome $y^*$ reveals}
  Let $k(t) = \sum_{j = 1}^n \bI{t_j < t}$, $p(t) = p_{k(t)}$, and $\tilde{p}^i(t) = \tilde{p}^i_{k(t)}$.\;
  \For{$i = 1$ \KwTo $n$}{
  $r_i \gets \int_{t>0} (S(p(t), y^*) - S(\tilde{p}^i(t), y^*)) h(t) \dd t$ \label{line:m1_r}
  }
\end{algorithm}

\begin{theorem} \label{thm:m1}
  Assume $h(t) > 0$ for all $t > 0$, $\int_{t>0} h(t) \dd t < \infty$, $\frac{\dd^2 F_c(t)}{\dd c^2} < 0$,\footnote{decreasing marginal benefit} and $F_c(t)$ is the c.d.f. of a non-negative random variable for $c \ge 0$.
  \Cref{alg:m1} is individually rational, and there exists a strict perfect Bayesian equilibrium that is socially optimal (over all symmetric strategy profiles) and satisfies the following properties:
  \begin{description}
    \item[Effort Optimality] Every agent invests the ``right'' amount of effort as \eqref{eq:righteffort2}.
    \item[Truthfulness] Each agent makes a Bayesian update on the market belief.
    \item[Timeliness] For all $i\in \mathcal{N}$, $s_i = t_i$.
  \end{description}
\end{theorem}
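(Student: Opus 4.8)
The plan is to follow the three-part structure of \Cref{thm:s}, establishing Truthfulness, Timeliness, and Effort Optimality as separate claims and then assembling them with individual rationality. The whole argument rests on a single structural feature of the payment in \Cref{alg:m1}: since $\tilde{p}^i(t)$ is built from exactly the reports other than $i$'s, we have $p(t) = \tilde{p}^i(t)$ for every $t \le s_i$, so the reward collapses to $r_i = \int_{t > s_i} \left(S(p(t), y^*) - S(\tilde{p}^i(t), y^*)\right) h(t)\,\dd t$; that is, agent $i$ is paid exactly his marginal contribution to the value of information at each instant after he reports. Taking expectations gives the key identity $\bE{r_i} = \bE{V} - \bE{V^{-i}}$, where $V^{-i} = \int_{t>0} S(\tilde{p}^i(t), Y) h(t)\,\dd t$ is the value the market would generate in $i$'s absence.

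For Truthfulness I would fix the report time $s_i$ and assume all others are truthful, so that at each $t > s_i$ the belief $\tilde{p}^i(t)$ already aggregates the signals reported by the others before $t$, and $p(t)$ is obtained from it by one application of $\mathit{Update}$ with $b_i$. Only the $S(p(t), \cdot)$ term depends on $b_i$, and by \Cref{prop:update} the correct likelihood-ratio report turns $\tilde{p}^i(t)$ into the true posterior given all signals preceding $t$, simultaneously for every such $t$ and regardless of which others have reported. Strict properness of $S$ then makes truthful reporting the unique maximizer of $\bE{S(p(t),Y)}$ at each $t$, hence of the whole integral, mirroring \Cref{lemma:s_truthful}.

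Timeliness is where the sequential setting genuinely departs from \Cref{thm:s}, and I expect it to be the main obstacle. Under the candidate equilibrium the other agents report at their own signal-arrival times irrespective of market history, so $i$'s choice of report time does not change when others report; consequently, comparing a report at $s_i$ against one at any $s_i' > s_i$ leaves $p(t)$ and $\tilde{p}^i(t)$ unchanged for every $t > s_i'$ (the set of preceding reports is identical), while over $(s_i, s_i')$ the earlier report contributes the nonnegative quantity $S(p(t),y^*) - S(\tilde{p}^i(t), y^*)$ rather than zero. Using that the expected marginal value of one more conditionally-independent signal is nonnegative (information-monotonicity of proper scoring rules) together with $h(t) > 0$, reporting earlier is weakly, and strictly once the signal is informative, better. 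The subtle point is that this holds even though the per-instant marginal value need not be monotone in $t$: unlike the sequential increments of a traditional market (\Cref{section:delay}), an agent never forfeits a later high-value contribution by reporting early, he only adds the earlier ones. Reporting before $t_i$ is ruled out because, holding only one report and lacking a signal, the best truthful message is uninformative and yields zero; hence $s_i = t_i$.

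For Effort Optimality I would exploit the VCG-like alignment exposed by the decomposition. Since $\tilde{p}^i(t)$ never depends on agent $i$'s report, $\bE{V^{-i}}$ is independent of $c_i$, so $\partial u_i/\partial c_i = \partial \bE{V}/\partial c_i - 1 = \partial W/\partial c_i$: each agent's private first-order condition coincides identically with the social one, exactly as \Cref{eq:br1} matched \Cref{eq:sw1}. At the welfare-maximizing symmetric profile $\bm{c}^*$ of \Cref{eq:righteffort2}, symmetry gives $\partial W/\partial c_i = 0$, hence $\partial u_i/\partial c_i = 0$, and the assumption $\dd^2 F_c(t)/\dd c^2 < 0$, which renders $\bE{V}$ and thus $u_i$ concave in $c_i$, makes this stationary point the unique best response, so $\bm{c}^*$ is an equilibrium. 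Combining the three claims, at every information set immediate truthful reporting is strictly optimal and the optimal effort is chosen at the outset, yielding a strict perfect Bayesian equilibrium; individual rationality follows as in \Cref{thm:s}, since deviating to $c_i = 0$ secures nonnegative expected utility. The remaining technical work I anticipate is in justifying the concavity of $\bE{V}$ in $c_i$ through the dependence of the report-time distribution on effort via $F_{c_i}$, and in handling the continuous-time integral and counterfactual beliefs carefully enough that the interchange of expectation and integration and the per-instant optimality arguments are rigorous.
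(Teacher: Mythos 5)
Your proposal is correct and follows the paper's overall architecture (three lemmas for truthfulness, timeliness, and effort optimality, then individual rationality via deviation to $c_i=0$), and your truthfulness and timeliness arguments are essentially the paper's own (\Cref{lemma:truthful}, \Cref{lemma:timely}), only spelled out in more detail: the paper likewise reduces timeliness to the observation that delaying only deletes the nonnegative integrand $v_{k(t)|i}-\tilde v_{k(t)|i}$ over the delay interval while leaving everything after unchanged. Where you genuinely depart from the paper is effort optimality. The paper's \Cref{lemma:right_effort} computes $\dd W/\dd c$ and $\dd u_i/\dd c_i$ explicitly through binomial expansions and checks that the two first-order conditions, \eqref{eq:sw} and \eqref{eq:best_response}, coincide at a symmetric profile. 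You instead observe that $u_i - W = \sum_{j\ne i} c_j - \bE{V^{-i}}$ is constant in $c_i$ (valid because the candidate equilibrium strategies of the others are history-independent, so $\bE{V^{-i}}$ cannot depend on $i$'s effort), hence $\partial u_i/\partial c_i \equiv \partial W/\partial c_i$, and then use exchangeability of $W$ to turn optimality along the diagonal into coordinate-wise stationarity. This is exactly the VCG-style alignment that the paper only states informally in its ``Connection to VCG'' subsection; you promote it to the actual proof. Your route is cleaner and makes the mechanism's design principle transparent; the paper's computational route has the side benefit of producing the closed-form condition \eqref{eq:best_response}, which is what one actually solves to find the equilibrium (and what the simulations in \Cref{section:dyna} rely on). Two small points you should patch when writing this up: (i) your symmetry argument $\partial W/\partial c_i=0$ presumes an interior symmetric optimum; at a boundary optimum $c^*=0$ you need the one-sided version, where concavity still gives $c_i=0$ as the best response (the paper handles this corner explicitly only in the single-batch proof); and (ii) the concavity of $\bE{V}$ in $c_i$ that you defer requires not just $\dd^2 F_c(t)/\dd c^2<0$ but also nonnegativity of the increments $v_{k+1}-v_k$, i.e., the same information-monotonicity fact you already invoke for timeliness, so it is available but should be cited at that step too.
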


\subsection{Intuition and Proof Sketch}

Our core idea is to pay each agent by the actual value of information minus the counterfactual value of information as if he had not \emph{updated the market belief}. Let $\tilde{V}^i$ be the counterfactual value w.r.t.\ agent $i$. The reward (in Line~\ref{line:m1_r} of \Cref{alg:m1}) is given by
\[V - \tilde{V}^i = \int_{t>0} (S(p(t), y) - S(\tilde{p}^i(t), y)) h(t) \dd t,\]
where $\tilde{p}^i(t)$ is what the market belief would be at time $t$ if agent $i$ had not changed anything in the market. \Cref{fig:ex-ante-reward} gives an intuition for $\bE{V - \tilde{V}^i}$.
Note that we are not talking about a counterfactual value for the case as if an agent had not \emph{participated the game}. The number of agents is still $n$, and other agents do the same.

\begin{figure}[t]
  \centering
  \begin{tikzpicture}
    \newcommand{\enormous}{\fontsize{17}{17}\selectfont}
    \node[violet!50] at (1.7 * 0.7, 1.1 * 0.7) {\textsf{\textbf{\enormous r}}};
    \node[violet!50] at (2.7 * 0.7, 1.1 * 0.7) {\textsf{\textbf{\enormous e}}};
    \node[violet!50] at (3.7 * 0.7, 2.3 * 0.7) {\textsf{\textbf{\enormous w}}};
    \node[violet!50] at (4.4 * 0.7, 3.4 * 0.7) {\textsf{\textbf{\enormous a}}};
    \node[violet!50] at (5.4 * 0.7, 4.2 * 0.7) {\textsf{\textbf{\enormous r}}};
    \node[violet!50] at (6.5 * 0.7, 5.05 * 0.7) {\textsf{\textbf{\enormous d}}};
    \begin{axis}[
        scale=0.7,
        transform shape,
        xlabel=$t$,
        ylabel=$v_{k(t)}$,
        xtick=\empty,
        ytick=\empty,
        extra x ticks = {1.16225},
        extra x tick labels = {$t^*$},
        ymax=0.8,
        no markers,
        axis lines=middle,
        legend style={at={(0.02, 0.85)},anchor=north west}
      ]
      \addplot+[RedOrange, thick, name path = cf] coordinates {(0, 0) (0.313178, 0) (0.313178, 0.0816327) (3.86526,
          0.0816327) (3.86526, 0.2399) (4.78407, 0.2399) (4.78407,
          0.403947) (5.58132, 0.403947) (5.58132, 0.547144) (7.05955,
          0.547144) (7.05955, 0.662703) (8.05955, 0.662703)};
      \addplot+[NavyBlue, thick, name path = a] coordinates {(0, 0) (0.313178, 0) (0.313178, 0.0816327) (1.16225,
          0.0816327) (1.16225, 0.2399) (3.86526, 0.2399) (3.86526,
          0.403947) (4.78407, 0.403947) (4.78407, 0.547144) (5.58132,
          0.547144) (5.58132, 0.662703) (7.05955, 0.662703) (7.05955,
          0.752018) (8.05955, 0.752018)};
      \addplot[gray, dashed] coordinates {(1.16225, 0) (1.16225, 0.0816327)};
      \addplot fill between[of = cf and a, split, every even segment/.style = {violet!10}, every odd segment/.style = {violet!10}];
      \legend{c.f.\ value, actual value}
    \end{axis}
  \end{tikzpicture}
  \caption{\emph{Ex-ante} total reward received by an agent (taken expectation over \emph{all} agents' signals) for a fixed time sequence of signal discovery. $v_k = \bE{S(p_k, y) - S(p_0, y)}$ is the expected increase of score due to the first $k$ reports, $k(t)$ is the number of report up to time $t$, and $t^*$ is the time of report. A late report (larger $t^*$) reduces the reward. A non-truthful report shifts the \emph{actual curve} downward, also reducing reward.
  }
  \label{fig:ex-ante-reward}
\end{figure}
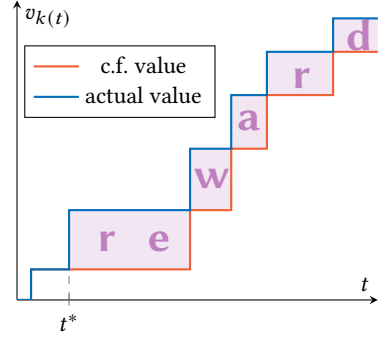



\begin{lemma}[Effort Optimality] \label{lemma:right_effort}
  Agents are incentivized to invest the ``right'' amount of effort that maximizes the expected social welfare, assuming all updates are timely and truthful.
\end{lemma}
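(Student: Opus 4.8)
The plan is to exploit the VCG-like structure of the reward in \Cref{alg:m1}: agent $i$ is paid his marginal contribution $V - \tilde{V}^i$ to the value of information. The crucial observation is that the counterfactual value $\tilde{V}^i$, built from the reports of all agents \emph{except} $i$, does not depend on agent $i$'s own effort, reporting time, or report content. Hence, assuming the other agents are timely and truthful, agent $i$'s expected utility decomposes as $u_i = \bE{V} - \bE{\tilde{V}^i} - c_i$, where the middle term is a constant with respect to $c_i$. Maximizing $u_i$ over $c_i$ is therefore equivalent to maximizing $\bE{V} - c_i$, and since the other agents' costs $\sum_{j\neq i} c_j$ are also constant, this is in turn equivalent to maximizing the social welfare $W = \bE{V} - \sum_j c_j$ over $c_i$. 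In other words, each agent privately faces exactly the social objective; this is the mechanism by which effort optimality will follow.

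First I would make the best response well-posed. Conditioning on the signals and on the other agents' arrival times (which depend on $\bm{c}_{-i}$ but not on $c_i$), let $g(t)$ denote the expected marginal value contributed by agent $i$ when his report arrives at time $t$; by timeliness his report time equals his signal time $T_i \sim F_{c_i}$, so $\bE{r_i} = \int_0^\infty g(t)\,\dd F_{c_i}(t)$. Because the contribution $S(p(\cdot),y) - S(\tilde{p}^i(\cdot),y)$ is weighted by $h > 0$ over the window $(t,\infty)$, reporting earlier contributes weakly more value, so $g$ is non-increasing. Integrating by parts and using that the latency is positive (so $F_{c_i}(0)=0$) gives $\bE{r_i} = g(0) + \int_0^\infty (1 - F_{c_i}(t))\,g'(t)\,\dd t$, whence
\[
  \frac{\partial^2 u_i}{\partial c_i^2} = \int_0^\infty \Big(-\frac{\partial^2 F_{c_i}(t)}{\partial c_i^2}\Big)\,g'(t)\,\dd t \le 0,
\]
using the hypothesis $\frac{\dd^2 F_c(t)}{\dd c^2} < 0$ together with $g' \le 0$. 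Since $F_{c_i}(t)$ is bounded by $1$, $\frac{\partial}{\partial c_i} F_{c_i} \to 0$ as $c_i \to \infty$, so $\frac{\partial u_i}{\partial c_i} \to -1$; concavity then yields a unique best response, attained either at $c_i = 0$ or at the interior point where $\frac{\partial}{\partial c_i}\bE{V} = 1$.

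It remains to match this best-response condition to the social optimum. Both $\bE{V}$ and $W$ depend on the whole profile $\bm{c}$, and at a symmetric profile $c_i = c$ for all $i$, symmetry gives $\frac{\dd}{\dd c} W(c,\dots,c) = \sum_j \frac{\partial \bE{V}}{\partial c_j}\big|_{\mathrm{sym}} - n = n\big(\frac{\partial \bE{V}}{\partial c_i}\big|_{\mathrm{sym}} - 1\big)$. Thus the social-welfare first-order condition $\frac{\dd W}{\dd c} = 0$ is precisely agent $i$'s best-response condition $\frac{\partial}{\partial c_i}\bE{V} = 1$ (and the boundary case $c = 0$ matches as well). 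Taking $\bm{c}^\star$ to be the symmetric effort maximizing $W$, each agent's best response to the others playing $c^\star$ is again to play $c^\star$, so $\bm{c}^\star$ is a symmetric equilibrium whose effort is optimal in the sense of \eqref{eq:righteffort2}.

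I expect the main obstacle to be the second step: rigorously defining the marginal-value function $g$ and proving it is non-increasing while being genuinely independent of $c_i$. This requires carefully decoupling agent $i$'s contribution from the joint randomness of the other agents' signals and arrival times, and verifying that the boundary terms in the integration by parts vanish (using $\int h < \infty$ so that value concentrates at finite times, and positivity of the latency). The remaining pieces---the cancellation of $\tilde{V}^i$ and the symmetry matching of first-order conditions---are then short, mirroring the corresponding computation in the single-batch proof of \Cref{lemma:s_effort}.
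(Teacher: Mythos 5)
Your proposal is correct, and the way you establish the key identity is genuinely different from the paper's proof. The paper proceeds by brute force: it expands the social first-order condition and the agent's first-order condition into binomial sums over the increments $v_{k+1}-v_k$ (its Equations~\eqref{eq:sw} and~\eqref{eq:best_response}) and then observes that the two expressions coincide at a symmetric profile, so the welfare--utility alignment appears there as an after-the-fact algebraic coincidence. You instead derive the alignment structurally: since the counterfactual belief $\tilde{p}^i$ is built only from the other agents' reports, $\bE{\tilde{V}^i}$ is constant in $c_i$, hence $u_i$ and $W$ differ by a constant as functions of $c_i$; exchangeability of $\bE{V}$ then gives $\frac{\dd}{\dd c}W(c,\dots,c)=n\left(\frac{\partial \bE{V}}{\partial c_i}-1\right)$, which is $n$ times the agent's first-order condition. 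This buys generality (the argument works for any marginal-contribution payment with an exchangeable value function, with no binomial bookkeeping) and it demystifies the ``surprising'' coincidence the paper remarks on in the single-batch proof of \Cref{lemma:s_effort}; what the paper's computation buys in exchange is the explicit equilibrium equation in terms of $v_k$, $F_c$, and $h$, which is precisely what gets solved numerically in \Cref{section:dyna}. Two points in your write-up deserve emphasis because they are where a purely ``VCG'' argument would otherwise leak. First, you correctly recognized that coordinate-wise alignment of $u_i$ with $W$ does not by itself make the symmetric (diagonal) maximizer an equilibrium; one needs the symmetry/FOC matching plus concavity of $u_i$ in $c_i$ to upgrade the diagonal first-order condition into a genuine best response, and you supply both. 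Second, your concavity argument via $g$ non-increasing and integration by parts is exactly equivalent to the paper's rewriting of $\bE{r_i}$ as $\int_{t>0}F_{c_i}(t)\,w(t)\,h(t)\,\dd t$ with $w\ge 0$ (your $-g'$ is the paper's nonnegative weight, with $w\ge 0$ following from properness of $S$, i.e., $v_{k+1}\ge v_k$), so that step is the same argument in a different guise rather than an additional gap.
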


\begin{lemma}[Truthfulness] \label{lemma:truthful}
  No matter what time an agent makes his update, a truthful update is better than a non-truthful one, assuming all other updates are truthful.
\end{lemma}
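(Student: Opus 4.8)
The plan is to read off the special structure of the reward in Line~\ref{line:m1_r}. Agent $i$'s reward is $\int_{t>0}\big(S(p(t),y^*)-S(\tilde p^i(t),y^*)\big)h(t)\dd t$, and by construction the counterfactual belief $\tilde p^i(t)$ is computed while \emph{skipping} agent $i$'s own report. Hence, once the (truthful) reports of all other agents are fixed, $\tilde p^i(t)$ is completely independent of what agent $i$ reports; the counterfactual term is a constant as far as agent $i$'s decision is concerned. Maximizing $\bE{r_i}$ over agent $i$'s report $b_i$ is therefore equivalent to maximizing $\bE{\int_{t>0}S(p(t),Y)h(t)\dd t}$. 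To address the ``no matter what time'' clause I would fix an arbitrary report time $s_i$ and optimize only over $b_i$.

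Next I would split the time axis at $s_i$. For $t<s_i$ agent $i$ has not yet reported, so $p(t)$ and $\tilde p^i(t)$ have incorporated exactly the same set of reports (all from agents other than $i$, in the same order) and are equal; the integrand vanishes, and $b_i$ can only affect the integral over $t\ge s_i$. On that range, since $h>0$ and $\int h<\infty$, by Fubini it suffices to maximize $\bE{S(p(t),Y)}$ for each fixed $t\ge s_i$. Conditioning on the $\sigma$-algebra $\mathcal F_t$ generated by the reports received by time $t$, the belief $p(t)$ is a deterministic function of those reports and hence $\mathcal F_t$-measurable, so $\bE{S(p(t),Y)\given \mathcal F_t}$ is the expected score of a fixed distribution evaluated against the true posterior $q(t):=\bP{Y\given \mathcal F_t}$. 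By strict properness of $S$ this conditional expectation is maximized, pointwise in the realization of $\mathcal F_t$, precisely when $p(t)=q(t)$.

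The crux---and the step I expect to be the main obstacle---is that a \emph{single} report $b_i$ must force $p(t)=q(t)$ \emph{simultaneously} for every $t\ge s_i$, even though the base belief $\tilde p^i(t)$ differs across these times (it has absorbed different numbers of the other agents' reports). This is exactly where conditional independence does the work: by \Cref{prop:update,prop:z} the correct update is the likelihood-ratio multiplier $b_{i,y}=\bP{X_i\given Y=y}/\bP{X_i\given Y\ne y}$, which does \emph{not} depend on the current market belief. So the truthful report drives every base belief to the correct posterior at once, maximizing the integrand at every $t\ge s_i$ and thus the whole integral. Conversely, any report with a different multiplier shifts the likelihood ratio of $p(t)$ by the same wrong factor at every $t\ge s_i$, so $p(t)\ne q(t)$ on a set of positive $h$-measure; strict properness then makes the expected reward strictly smaller, giving the strict inequality. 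This simultaneity across a continuum of base beliefs is the genuinely new difficulty relative to the single-batch \Cref{lemma:s_truthful}, where the agent is effectively always the last updater and so confronts only one base belief.
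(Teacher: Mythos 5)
Your proposal is correct and follows essentially the same route as the paper's (very terse) proof: the paper likewise argues that the counterfactual term is untouched by the agent's report, that a non-truthful report makes every later posterior wrong, and that strict properness then lowers the expected score at each later time, hence the reward. Your write-up simply makes explicit the two steps the paper leaves implicit --- the pointwise-in-$t$ comparison under $h>0$, and the fact that conditional independence lets a single belief-independent likelihood-ratio report (\Cref{prop:update,prop:z}) correct all future beliefs simultaneously.
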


\begin{lemma}[Timeliness] \label{lemma:timely}
  Every agent is incentivized to update the market belief as soon as he gets his signal, assuming all updates are truthful.
\end{lemma}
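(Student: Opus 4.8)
The plan is to exploit the decomposition of agent $i$'s reward, $r_i = V - \tilde V^i$, with $\tilde V^i = \int_{t>0} S(\tilde p^i(t), y^*) h(t)\,\dd t$ the counterfactual value obtained by deleting agent $i$'s report. The crucial first observation is that $\tilde p^i(t)$, and hence $\tilde V^i$, does not depend on agent $i$'s report time $s_i$ or content $b_i$ at all: by construction (\Cref{alg:m1}) the counterfactual belief $\tilde p^i$ is updated by every report \emph{except} $i$'s, so with the other agents' strategies held fixed, $\tilde V^i$ is a constant from $i$'s perspective. Consequently, choosing $s_i$ to maximize expected reward is the same as choosing $s_i$ to maximize $\bE{V \given X_i}$, where $X_i$ is his realized signal. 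The lemma thus reduces to showing this expected actual value is nonincreasing in $s_i$, i.e.\ reporting earlier (weakly) helps.

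Second, I would establish the structural fact that the market belief $p(t)$ depends only on the \emph{set} of reports received strictly before $t$, not on their order. In odds form, \eqref{eq:update} multiplies the current odds $p_y/(1-p_y)$ by the factor $b_y/(1-b_y)$, and multiplication commutes; equivalently, by conditional independence (\Cref{prop:update}) the aggregate is just the Bayesian posterior given that set of signals. Hence moving $i$'s truthful report from $s_i$ to an earlier $s_i' < s_i$ leaves $p(t)$ unchanged for $t < s_i'$ and for $t \ge s_i$, and on $[s_i', s_i)$ replaces $p^{-i}(t) := \bP{Y \given Z_t}$ by $p^{+i}(t) := \bP{Y \given Z_t, X_i}$, where $Z_t = \{X_j : j\ne i,\ s_j < t\}$. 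Therefore
\[
\bE{\Delta V \given X_i} = \int_{s_i'}^{s_i} \bE{S(p^{+i}(t), Y) - S(p^{-i}(t), Y) \given X_i}\, h(t)\,\dd t .
\]

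Third, I would show the integrand is nonnegative via properness, conditioning on the agent's own signal. Fix $X_i = x_i$ and a realization $Z_t = z$; then $Y \sim b := \bP{Y \given z, x_i}$, the refined report $p^{+i}(t)$ equals exactly this belief $b$, while $p^{-i}(t) = \bP{Y \given z}$ is some other distribution $p$. Properness gives $\bE[Y\sim b]{S(b, Y)} \ge \bE[Y\sim b]{S(p, Y)}$, and integrating over $z$ (still conditioning on $x_i$) preserves the inequality, so $\bE{\Delta V \given X_i} \ge 0$. Since $h(t) > 0$ and $S$ is strict, the gain is strictly positive whenever $p^{+i}(t) \ne p^{-i}(t)$ on a positive-measure set, matching the picture in \Cref{fig:ex-ante-reward} that a late report merely shifts the actual curve rightward. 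Finally, a truthful update must incorporate $i$'s realized signal, which he does not possess before $t_i$, so the earliest admissible truthful report time is $s_i = t_i$; by the monotonicity just shown this is optimal, giving $s_i = t_i$.

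The main obstacle is the third step. Properness of a scoring rule is ordinarily invoked unconditionally, whereas here the relevant expectation is taken conditional on the agent's own signal $X_i$. The care needed is to recognize that, once $X_i = x_i$ is fixed, the correct refined posterior $p^{+i}(t)$ is precisely the belief $b$ over $Y$ against which we evaluate, so that the comparison ``report $b$ versus report $p$ when $Y \sim b$'' is exactly the definition of properness; coupled with the order-independence of the aggregate belief from the second step (which is what lets us localize the effect of retiming to the single interval $[s_i', s_i)$), this is what makes $\bE{V \given X_i}$ monotone in the report time.
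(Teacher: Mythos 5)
Your proof is correct and takes essentially the same route as the paper's: both observe that the counterfactual belief $\tilde p^i(t)$ is unaffected by agent $i$'s own actions, so delaying a truthful report only forfeits the integral over the delay interval of the gap between the refined and unrefined posterior scores, a gap that is nonnegative by properness conditioned on $X_i$ --- this is exactly the paper's interim reward formula \eqref{eq:interim} read as a function of the report time $t^*$. The only minor difference is that the paper also explicitly rules out reporting \emph{before} the signal arrives (non-positive expected reward by properness), whereas you exclude that case by appealing to the assumption that the agent's own update is truthful.
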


\begin{proof}[Proof of \Cref{thm:m1}]
  Each agent $i$ decides how much effort to make at the beginning. By \Cref{lemma:right_effort}, he will not deviate at this decision. Then, at each time, he decides whether to report and if so, what to report. By \Cref{lemma:timely}, he will report if he has a signal. By \Cref{lemma:truthful}, he will report truthfully.
  %
\end{proof}

The proofs of the lemmas are similar to those in \Cref{section:s} and are postponed to the supplementary materials.

\subsection{Connection to VCG}
There are both similarities and differences between our mechanisms and the VCG mechanism. They are similar because both of them have a payoff function that can be interpreted as an actual term minus a counterfactual term. Also, in our mechanisms, the utility function of an agent is --- to some extent --- aligned with the social welfare as a function of his action, as in VCG.

However, a straightforward application of VCG fails. In VCG, we need to compute the utility of each agent, which is impossible here, because the amount of effort each agent invests is never revealed. In our mechanisms, the alignment of the agent's utility and social welfare is achieved implicitly without the principal computing them. In addition, VCG deals with a single-stage game, while our mechanisms deal with multi-stage games.  This is to say that the signals must be discovered before they can be (truthfully) revealed. Finally, VCG guarantees the DSIC (dominant-strategy incentive-compatible) property, which is not the case in our setting, where agents respond to others when choosing the effort level.

\begin{figure*}[t]
\centering
\begin{minipage}[t]{.32\textwidth}
  \centering
  \begin{tikzpicture}
    \begin{axis}[
        width=\textwidth,
        xlabel={$\lambda$, ease},
        ylabel={$c$, effort},
        ymax=0.41,
        xmin=0,
        no markers,
        axis lines=middle,
        legend style={nodes={scale=0.7, transform shape}},
      ]
      \addplot+[thick, NavyBlue, dashed, domain=0:15] {0.25};
      \addplot+[thick, RedOrange] coordinates {(0.5, 0.) (0.55, 0.0596281) (0.6, 0.107634) (0.65,
          0.146831) (0.7, 0.179214) (0.75, 0.206231) (0.8,
          0.228959) (0.85, 0.248214) (0.9, 0.264624) (0.95, 0.278681) (1.,
          0.290773) (1.05, 0.301213) (1.1, 0.310252) (1.15,
          0.318096) (1.2, 0.324917) (1.25, 0.330854) (1.3,
          0.336026) (1.35, 0.340531) (1.4, 0.344454) (1.45,
          0.347867) (1.5, 0.35083) (1.55, 0.353397) (1.6, 0.355613) (1.65,
          0.357516) (1.7, 0.359143) (1.75, 0.360521) (1.8,
          0.361679) (1.85, 0.362638) (1.9, 0.363419) (1.95, 0.364041) (2.,
          0.364519) (2., 0.364519) (2.1, 0.365099) (2.2, 0.365256) (2.3,
          0.365066) (2.4, 0.364591) (2.5, 0.363883) (2.6, 0.362982) (2.7,
          0.361922) (2.8, 0.360733) (2.9, 0.359436) (3., 0.358051) (3.,
          0.358051) (3.5, 0.350304) (4., 0.341978) (4.5, 0.333658) (5.,
          0.325601) (5.5, 0.317917) (6., 0.310643) (6.5, 0.303779) (7.,
          0.297311) (7.5, 0.291216) (8., 0.285468) (8.5, 0.280042) (9.,
          0.274912) (9.5, 0.270055) (10., 0.265449) (10.5, 0.261076) (11.,
          0.256917) (11.5, 0.252956) (12., 0.249178) (12.5,
          0.245569) (13., 0.242118) (13.5, 0.238814) (14.,
          0.235647) (14.5, 0.232607) (15., 0.229687)};
      \legend{pred.\ market, MVP market};
    \end{axis}
  \end{tikzpicture}
  \caption{How ease affects agent's effort.}
  \label{fig:eas}
\end{minipage}
\hfill
\begin{minipage}[t]{.32\textwidth}
  \centering
  \begin{tikzpicture}
    \begin{axis}[
        width=\textwidth,
        xlabel={$\beta$, noise},
        ylabel={$c$, effort},
        ymin=0,
        xmax=0.39,
        xtick={0.1, 0.2, 0.3},
        no markers,
        axis lines=middle,
        legend style={nodes={scale=0.7, transform shape}},
      ]
      \addplot+[thick, NavyBlue, dashed] coordinates {(0., 0.9) (0.005, 0.813105) (0.01, 0.733842) (0.015,
          0.661436) (0.02, 0.595219) (0.025, 0.534609) (0.03,
          0.479097) (0.035, 0.428236) (0.04, 0.38163) (0.045,
          0.338929) (0.05, 0.299819) (0.055, 0.264018) (0.06,
          0.231271) (0.065, 0.20135) (0.07, 0.174045) (0.075,
          0.149163) (0.08, 0.126531) (0.085, 0.105985) (0.09,
          0.0873764) (0.095, 0.0705662) (0.1, 0.0554253) (0.105,
          0.0418332) (0.11, 0.0296771) (0.115, 0.0188513) (0.12,
          0.00925666) (0.125, 0.000799779) (0.13, -0.00660716)};
      \addplot+[thick, RedOrange] coordinates {(0., 0.341641) (0.005, 0.313929) (0.01, 0.286163) (0.015,
          0.258298) (0.02, 0.230291) (0.025, 0.202101) (0.03,
          0.17369) (0.035, 0.145021) (0.04, 0.116058) (0.045,
          0.0867627) (0.05, 0.0570997) (0.055,
          0.0270306) (0.06, -0.00348493)};
      \addplot+[thick, YellowOrange] coordinates {(0., 0.448683) (0.005, 0.437657) (0.01, 0.426356) (0.015,
          0.414765) (0.02, 0.40287) (0.025, 0.390658) (0.03,
          0.378118) (0.035, 0.365237) (0.04, 0.352007) (0.045,
          0.338418) (0.05, 0.324463) (0.055, 0.310133) (0.06,
          0.295423) (0.065, 0.280327) (0.07, 0.264839) (0.075,
          0.248956) (0.08, 0.232674) (0.085, 0.215987) (0.09,
          0.198891) (0.1, 0.163453) (0.12, 0.0873232) (0.14,
          0.00292199) (0.16, -0.0957086)};
      \addplot+[thick, yellow!70!black] coordinates {(0., 0.381056) (0.005, 0.382439) (0.01, 0.383301) (0.015,
          0.383649) (0.02, 0.383491) (0.025, 0.382834) (0.03,
          0.381689) (0.035, 0.380066) (0.04, 0.377975) (0.045,
          0.375426) (0.05, 0.372432) (0.055, 0.369004) (0.06,
          0.365154) (0.065, 0.360893) (0.07, 0.356234) (0.075,
          0.35119) (0.08, 0.345772) (0.085, 0.339994) (0.09,
          0.333868) (0.1, 0.320629) (0.12, 0.290574) (0.14,
          0.256491) (0.16, 0.219273) (0.18, 0.179778) (0.2,
          0.13878) (0.22, 0.0968958) (0.24, 0.0544652) (0.26,
          0.0112018) (0.28, -0.0360081)};
      \addplot+[thick, green] coordinates {(0., 0.232195) (0.005, 0.236822) (0.01, 0.240827) (0.015,
          0.244248) (0.02, 0.247121) (0.025, 0.24948) (0.03,
          0.251358) (0.035, 0.252785) (0.04, 0.253787) (0.045,
          0.254391) (0.05, 0.254618) (0.055, 0.254491) (0.06,
          0.254028) (0.065, 0.253249) (0.07, 0.252169) (0.075,
          0.250805) (0.08, 0.249171) (0.085, 0.247282) (0.09,
          0.245152) (0.1, 0.240215) (0.12, 0.227966) (0.14,
          0.213123) (0.16, 0.196313) (0.18, 0.178104) (0.2,
          0.159007) (0.22, 0.139463) (0.24, 0.119847) (0.26,
          0.100457) (0.28, 0.0815165) (0.3, 0.0631813) (0.32,
          0.0455409) (0.34, 0.0286234) (0.36,
          0.0123897) (0.38, -0.00331423)};
      \legend{pred.\ market,{MVP, $\lambda = 0.5$},{MVP, $\lambda = 1$},{MVP, $\lambda = 3$},{MVP, $\lambda = 12$}};
    \end{axis}
  \end{tikzpicture}
  \caption{How noise affects agent's effort.}
  \label{fig:noise}
\end{minipage}
\hfill
\begin{minipage}[t]{.32\textwidth}
  \centering
  \begin{tikzpicture}
    \begin{axis}[
        width=\textwidth,
        xlabel=$v_1$,
        ylabel={$c$, effort},
        ymax=0.26,
        no markers,
        axis lines=middle,
        legend style={at={(1,0.11)},anchor=south east,{nodes={scale=0.7, transform shape}}},
        ytick={0.1,0.2},
      ]
      \addplot+[thick, NavyBlue, dashed, domain=1:2] {(x - 1) / 2};
      \addplot+[thick, RedOrange] coordinates {(1., 0) (1.04, 0.0184216) (1.08, 0.03435) (1.12,
          0.0484578) (1.16, 0.0611652) (1.2, 0.0727549) (1.24,
          0.0834278) (1.28, 0.0933321) (1.32, 0.102581) (1.36,
          0.111264) (1.4, 0.119451) (1.44, 0.1272) (1.48, 0.134559) (1.52,
          0.141567) (1.56, 0.148259) (1.6, 0.154663) (1.64,
          0.160804) (1.68, 0.166704) (1.72, 0.172383) (1.76,
          0.177855) (1.8, 0.183138) (1.84, 0.188242) (1.88,
          0.193182) (1.92, 0.197966) (1.96, 0.202605) (2., 0.207107)};
      \addplot+[thick, YellowOrange] coordinates {(1., 0.207107) (1.04, 0.20981) (1.08, 0.212372) (1.12,
          0.214808) (1.16, 0.217127) (1.2, 0.219342) (1.24,
          0.221459) (1.28, 0.223488) (1.32, 0.225434) (1.36,
          0.227303) (1.4, 0.229102) (1.44, 0.230835) (1.48,
          0.232506) (1.52, 0.23412) (1.56, 0.235678) (1.6,
          0.237186) (1.64, 0.238646) (1.68, 0.24006) (1.72,
          0.241432) (1.76, 0.242763) (1.8, 0.244055) (1.84,
          0.24531) (1.88, 0.246531) (1.92, 0.247718) (1.96, 0.248874) (2.,
          0.25)};
      \addplot+[thick, yellow!70!black] coordinates {(1., 0.25) (1.04, 0.248901) (1.08, 0.247825) (1.12,
          0.246772) (1.16, 0.245742) (1.2, 0.244734) (1.24,
          0.243747) (1.28, 0.242782) (1.32, 0.241838) (1.36,
          0.240914) (1.4, 0.24001) (1.44, 0.239125) (1.48, 0.23826) (1.52,
          0.237413) (1.56, 0.236584) (1.6, 0.235773) (1.64,
          0.234979) (1.68, 0.234203) (1.72, 0.233443) (1.76,
          0.232699) (1.8, 0.231971) (1.84, 0.231258) (1.88,
          0.23056) (1.92, 0.229877) (1.96, 0.229208) (2., 0.228553)};
      \addplot+[thick, green] coordinates {(1., 0.228553) (1.04, 0.226706) (1.08, 0.224873) (1.12,
          0.223055) (1.16, 0.221252) (1.2, 0.219464) (1.24,
          0.217693) (1.28, 0.215937) (1.32, 0.214198) (1.36,
          0.212477) (1.4, 0.210773) (1.44, 0.209086) (1.48,
          0.207418) (1.52, 0.205768) (1.56, 0.204137) (1.6,
          0.202525) (1.64, 0.200932) (1.68, 0.199359) (1.72,
          0.197806) (1.76, 0.196272) (1.8, 0.194759) (1.84,
          0.193266) (1.88, 0.191793) (1.92, 0.190342) (1.96, 0.18891) (2.,
          0.1875)};
      \legend{pred.\ market,{MVP, $\lambda = 1$},{MVP, $\lambda = 2$},{MVP, $\lambda = 4$},{MVP, $\lambda = 8$}};
    \end{axis}
  \end{tikzpicture}
  \caption{How substitutability $v_1 / v_2$ (proportional to $v_1$ since $v_2$ is fixed) affects agent's effort.}
  \label{fig:subst}
\end{minipage}
\end{figure*}

\section{Simulations} \label{section:dyna}

The equilibrium depends on a variety of parameters. In this section, we analyze how it is affected by ease, noise, and substitutability of the information. We compare our socially optimal mechanism (\Cref{alg:m1}) with the traditional prediction market.\footnote{Following \Cref{section:rush}, we study the case where $\tilde{S} = S$.}
We assume the value of information decays exponentially: $h(t) = \eta e^{-\eta t}$ with parameter $\eta = 1$. The latency of signal discovery is also exponentially distributed: $F_c(t) = 1 - e^{-\lambda c t}$.  Here, $\lambda$ can be viewed as the ease of collecting the information, as the larger $\lambda$ is, the shorter latency the agent suffers.
There are $n = 2$ agents unless otherwise stated.
To simplify the calculation, we let $v_k = \bE{S(p_k, y) - S(p_0, y)}$. It is an intermediate variable that depends on the information structure $\bP{X \given Y}$ (or $\beta$).

\subsection{Ease}
Let $v_1 = 2$ and $v_2 = 3$, defined above. In our mechanism, the amount of effort in equilibrium automatically adapts to the ease of collecting information, even though the mechanism does not know anything about the parameters!
\Cref{fig:eas} is a visualization.
In prediction markets, agents invest too much for very easy ($\lambda \to \infty$) information, thus making the updates unnecessarily quick.
For information that takes a long time to discover, agents still invest equally much effort in prediction markets, while in our mechanism, they do not invest anything because we have the $\eta e^{-\eta t}$ term in the value of information --- the information value decays so quickly that the gain in information value is overwhelmed by the amount of effort invested.

\subsection{Noise}
Consider the scenario described in \Cref{section:delay}, with $\alpha = 0.1$. Recall that $\beta = \bP{X \ne Y}$ is the probability each signal differs from the true outcome. It can also be regarded as the noise of information.
It turns out that in prediction markets, agents invest too much not only for very easy information but also for very accurate ($\beta \to 0$) information, as shown in \Cref{fig:noise}.
When the signals become weak enough, agents no longer invest anything in either prediction markets (because a late reporting is encouraged) or our mechanism (because the gain in value is too little).

\begin{figure}[ht]
  \centering
  \includegraphics[width = 0.98 \hsize]{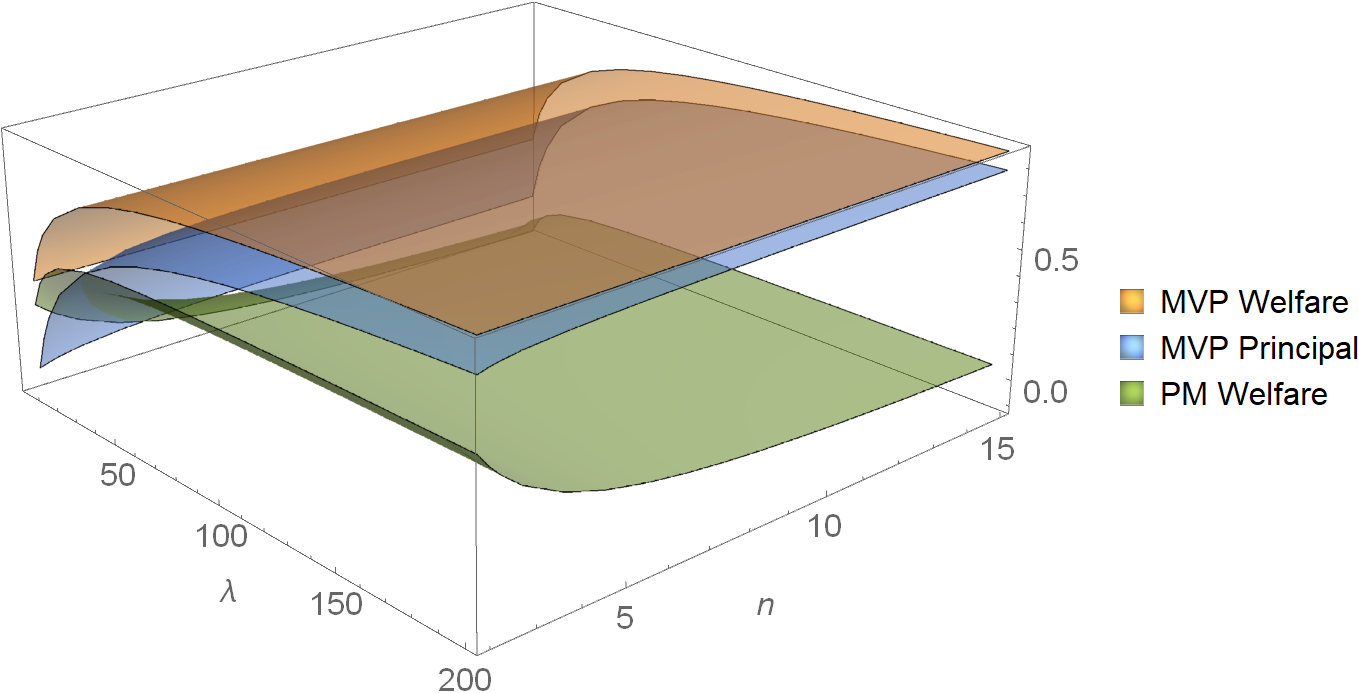}
  \caption{A comparison of the social welfare of the MVP market, the principal's utility of MVP market, and the social welfare of the traditional predictions markets. When there are many agents ($n$ is large) and the signals do not take much time to obtain ($\lambda$ is large), the MVP market has both high social welfare and high utility for the principal.}
  \label{fig:bad_welfare_m}
\end{figure}

\subsection{Substitutability}
Recall that $v_k$ is the expected increase of the score due to the first $k$ updates. We fix $v_2 = 2$, and see what happens when $v_1$ changes. Here, $v_1 / v_2$ can be considered as the substitutability of information.
As shown in \Cref{fig:subst}, in prediction markets, when value is more concentrated in the first report (higher substitutability), agents invest more effort to get a signal quickly.
Surprisingly, this is not always the case in our mechanism. When signals are very easy (quick to obtain), agents tend to invest \emph{less} when value is more concentrated in the first signal. This is because even though the first report brings high value, after the second report, the marginal value of the first report (against other signals) becomes much less. Moreover, the easier to obtain a signal, the stronger this effect.

\subsection{Social Welfare and Principal's Utility}
As we argue that agents invest too much effort for easy and accurate information in the traditional prediction market, one may wonder how bad the effect of such behavior could be on social welfare. We find that the social welfare approaches to 0 as the number of agents grows large, while in our proposed mechanism, both social welfare and the principal's utility are high, as shown in \Cref{fig:bad_welfare_m}. In this example, we assume $v_0 = 0$ and $v_k = 1$ for each $k \ge 1$ (or equivalently, $\beta = 0$). This is similar to what we show in \Cref{section:rush}.





\balance

\bibliographystyle{ACM-Reference-Format} 
\bibliography{ref}


\ifhaveappendix

\clearpage
\appendix

\section{Proof of Proposition~\ref{prop:update}}

\begin{proof}
  We use mathematical induction. Without loss of generality, we relabel the agents by the order they update the market. The agent who makes the $k$-th update has a signal $X_k$ and is given a market belief $p_{k-1}$. Suppose $p_{k-1} = \bP{Y \given X_1, \dots, X_{k-1}}$. Clearly, the base case $p_0 = \bP{Y}$ holds. The Bayesian update for $p_k$ is performed as follows:
  \begin{align*}
    p_k
     & \propto p_{k-1} \bP{X_k \given Y}                                             \\
     & = \bP{Y \given X_1, \dots, X_{k-1}} \bP{X_k \given Y}                         \\
     & \propto \bP{X_1, \dots, X_{k-1} \given Y} \bP{Y} \bP{X_k \given Y}            \\
     & = \bP{X_1, \dots, X_k \given Y} \bP{Y} \text{\quad(conditional independence)} \\
     & \propto \bP{Y \given X_1, \dots, X_k}.
  \end{align*}
  If two probability distributions are proportional, they are the same. Thus, $p_k = \bP{Y \given X_1, \dots, X_k}$.
\end{proof}

\section{Proofs for Propositions in Section~\ref{section:rush}}
Suppose agents are selfish in the above game, and choose their effort level strategically.  If agent $i$ invest effort $x$ and everyone else invests effort $c$, agent $i$'s expected utility is
\begin{equation}\label{eq:original_utility}
  u_i(x;c) = F(x)\sum_{k=0}^{n-1}\frac{1}{k+1}\binom{n-1}{k}F(c)^k(1-F(c))^{n-1-k}-x.
\end{equation}
\begin{proof}[Proof for \Cref{prop:linear}]
  In the centralized setting, the principal can set every agents' to a fixed value.  The maximum of \eqref{eq:original_walfare} happens at $c_{\rm opt} = \lambda^{-1} - \lambda^{-n/(n-1)}$ when $n \ge 2$, and the optimal social welfare is
  $$W^{(n)}(c_{\rm opt}) = 1 - \lambda^{-n/(n-1)} - n \left(\lambda^{-1} - \lambda^{-n/(n-1)}\right)$$
  which converges to $1 - \lambda^{-1} - \lambda^{-1} \ln\lambda>0$ as $n\to \infty$.

  For selfish agents, everyone invests $c_{\rm self}$ is a Bayesian Nash equilibrium when no one wants to deviate.  In our linear effort function case, agent $i$'s expected utility $u(x;c)$ in \eqref{eq:original_utility}  is linear in his effort level $x$.
  Therefore, everyone invests $c_{\rm self}$ is a Bayesian Nash equilibrium when
  $\sum_{k=0}^{n-1}\frac{1}{k+1}\binom{n-1}{k}F(c_{\rm self})^k(1-F(c_{\rm self}))^{n-1-k}=\lambda^{-1}.$
  By some calculation $ 1-(1-F(c_{\rm self}))^{n}=nc_{\rm self}$, so the social welfare~\eqref{eq:original_walfare} under $c_{\rm self}$ is
  $W^{(n)}(c_{\rm self}) = 0$.
  where all agents invest all possible reward to compete to be the first one.  This completes the proof.
\end{proof}

\begin{proof}[Proof of \Cref{prop:exp}]
  Using similar computation, in the centralized setting, the optimal cost is $c_{\rm opt} = \ln \lambda/(n\lambda)$.
  By \Cref{eq:original_walfare} and the definition of exponential effort function, $W^{(n)}(c) = 1-e^{-n\lambda c}-cn$.  Therefore, 
  $$W^{(n)}(c_{\rm opt}) = 1-\frac{1}{\lambda}-\frac{\ln \lambda}{\lambda}.$$
  
  For selfish agents, the symmetric Bayesian Nash equilibrium happens when everyone invests $c_{\rm self}$ and
  \begin{equation}\label{eq:exp1}
      n(1-\exp(-\lambda c_{\rm self})) = \lambda\exp(-\lambda c_{\rm self})(1-\exp(-n\lambda c_{\rm self})).
  \end{equation}
  Therefore, $\exp(-\lambda c_{\rm self}) = 1-\Theta(1/n)$.  Therefore, there exist $0<\alpha_L<\alpha_H$, and a sequence $(\alpha_n)_n$ such that $\exp(-\lambda c_{\rm self}) = 1-\alpha_n/n$, and  $\alpha_n$ are all in an interval $[\alpha_L, \alpha_H]$.  Now we can compute the social welfare in the strategic setting.  Applying \Cref{eq:exp1} to  \Cref{eq:original_walfare}, we have
  \begin{align*}
    W^{(n)}(c_{\rm self})
    & = \frac{n}{\lambda}(\exp(\lambda c_{\rm self})-1)-c_{\rm self}n \\
    & = \frac{n}{\lambda}\frac{\alpha_n}{n}-\frac{\alpha_n}{\lambda}+O\left(\frac{1}{n}\right)\tag{by Taylor expansion} \\
    & = O(1/n).
  \end{align*}
  
  We provide numerical results in \Cref{fig:original}.
\end{proof}

\section{Remaining Proofs of Section~\ref{section:s}}

\begin{proof}[Proof of \Cref{prop:z}]
  Assume w.l.o.g.\ that agents $1, \dots, k - 1$ have made updates and the others have not. A correct Bayesian update should transform $p_y = \bP{Y = y \given X_1, \dots, X_{k-1}}$ into $p'_y = \bP{Y = y \given X_1, \dots, X_k}$. We first calculate the following quantity:
  \begin{align*}
    & \quad \frac{p_y}{1 - p_y} \frac{b_{k,y}}{1 - b_{k, y}} \\
    & = \frac{\bP{Y = y \given X_1, \dots, X_{k-1}}}{\bP{Y \ne y \given X_1, \dots, X_{k-1}}} \frac{\bP{X_k \given Y = y}}{\bP{X_k \given Y \ne y}} \\
    & = \frac{\bP{X_1, \dots, X_{k-1} \given Y = y} \bP{Y = y}}{\bP{X_1, \dots, X_{k-1} \given Y \ne y} \bP{Y \ne y}} \frac{\bP{X_k \given Y = y}}{\bP{X_k \given Y \ne y}} \tag{Bayes rule} \\
    & = \frac{\bP{X_1, \dots, X_k,\ Y = y}}{\bP{X_1, \dots, X_k,\ Y \ne y}} \tag{conditional independence} \\
    & = \frac{\bP{Y = y \given X_1, \dots, X_k}}{\bP{Y \ne y \given X_1, \dots, X_k}}.
  \end{align*}
  Then we have
  \begin{align*}
    p'_y
    = \frac{\frac{p_y}{1 - p_y} \frac{b_{k,y}}{1 - b_{k,y}}}{1 + \frac{p_y}{1 - p_y} \frac{b_{k,y}}{1 - b_{k,y}}}
    & = \frac{\frac{\bP{Y = y \given X_1, \dots, X_k}}{\bP{Y \ne y \given X_1, \dots, X_k}}}{1 + \frac{\bP{Y = y \given X_1, \dots, X_k}}{\bP{Y \ne y \given X_1, \dots, X_k}}} \\
    & = \bP{Y = y \given X_1, \dots, X_k}.
  \end{align*}
\end{proof}

\section{Proofs of Section~\ref{section:m}}

\begin{proof}[Proof of \Cref{lemma:right_effort}]
  Since agents are symmetric to each other, we look for a symmetric equilibrium, where every agent $i$ invests the same amount of effort $c_i = c$. By \Cref{lemma:v_num}, the expected score $\bE{S(p(t), y)}$ at any time $t$ only depends on the number of updates before time $t$, assuming all updates are truthful.
  Let $v_k$ be the expected increase of the score due to the first $k$ updates, and $k(t)$ be the number of updates by time $t$, i.e., the number of values among $t_1, \dots, t_n$ that are less than $t$. Then we have
  \[v_{k(t)} = \bE[X_1, \dots, X_n, Y]{S(p(t), Y) - S(p(0), Y)}.\]
  Let $f_c(t)$ be the p.d.f.\ of the latency distribution, i.e., $f_c(t) = \frac{\dd}{\dd t} F_c(t)$.
  Without loss of generality, we relabel the agents by the order they get their signals.

  The expected social welfare is given by:
  \begin{align*}
    W & = \bE[T_1, \dots, T_n]{V} - n c \\
    & = \bE[T_1, \dots, T_n]{\int_{t > 0} v_{k(t)} h(t) \dd t} \\
    & \hspace{2cm} + \underbrace{\bE[T_1, \dots, T_n]{\int_{t > 0} \bE{S(p(0), y)} h(t) \dd t}}_{\text{constant}} {} - n c.
  \end{align*}
  Setting the derivative to be zero, we have
  \begin{align}
    0 & = \frac{\dd W}{\dd c} = \int_{t>0} \frac{\dd}{\dd c} \bE[T_1, \dots, T_n]{v_{k(t)}} h(t) \dd t - n \notag \\
    & = \int_{t>0} \frac{\dd F_c(t)}{\dd c} \frac{\dd}{\dd F_c(t)} \sum_{k=0}^n \binom{n}{k} F_c(t)^k (1 - F_c(t))^{n-k} v_k h(t) \dd t - n \notag \\
    & = \int_{t>0} \frac{\dd F_c(t)}{\dd c} \sum_{k=0}^n \binom{n}{k} \Big(k F_c(t)^{k-1} (1 - F_c(t))^{n-k} \notag \\
    & \hspace{2cm} - (n - k) F_c(t)^k (1 - F_c(t))^{n-k-1}\Big) v_k h(t) \dd t - n \notag \\
    & = \int_{t>0} \frac{\dd F_c(t)}{\dd c} n \Bigg(\underbrace{\sum_{k=1}^n \binom{n - 1}{k - 1} F_c(t)^{k-1} (1 - F_c(t))^{n-k} v_k}_{\text{substitute $k$ by $k + 1$}} \notag \\
    & \hspace{1.5cm} - \sum_{k=0}^{n-1} \binom{n - 1}{k} F_c(t)^k (1 - F_c(t))^{n-k-1} v_k\Bigg) h(t) \dd t - n \notag \\
    \begin{split}
      & = n \int_{t>0} \frac{\dd F_c(t)}{\dd c} \sum_{k = 0}^{n - 1} \binom{n - 1}{k} F_c(t)^k (1 - F_c(t))^{n - k - 1} \\
      & \hspace{4cm} \cdot (v_{k + 1} - v_k) h(t) \dd t - n.
    \end{split} \label{eq:sw}
  \end{align}

  On the other hand, the expected utility of agent $i$ is given by:
  \[u_i = \int_{s>0} f_{c_i}(s) \bE[T_{-i}, X_1, \dots, X_n, Y]{V - \tilde{V} \given T_i = s} \dd s - c_i,\]
  where $c_i$ is the effort made by agent $i$, and $T_{-i}$ denotes the latency of all agents except $i$ (i.e., $T_1, \dots, T_{i - 1}, T_{i + 1}, \dots, T_n$). Note that the expectation is \emph{ex ante}, i.e., it is computed by the agent at the very beginning before knowing any information. We have
  \begin{align*}
     & \bE[X_1, \dots, X_n, Y]{V - \tilde{V} \given T_i = s} \\
     & = \bE[X_1, \dots, X_n, Y]{\int_{t > 0} (S(p(t), Y) - S(\tilde{p}(t), Y)) h(t) \dd t \given T_i = s} \\
     & = \left.\int_{t > s} (v_{k(t)} - v_{k(t) - 1}) h(t) \dd t \given T_i = s\right..
  \end{align*}
  Then,
  \begin{align*}
    u_i & = \int_{s>0} f_{c_i}(s) \underbrace{\bE[T_{-i}]{\int_{t > s} (v_{k(t)} - v_{k(t) - 1}) h(t) \dd t \given T_i = s}}_{\text{independent to $c_i$}} \dd s - c_i \\
    & = \underbrace{\int_{s>0} f_{c_i}(s) \int_{t > s}}_{\text{switch the order}} \sum_{k = 0}^{n - 1} \binom{n - 1}{k} F_c(t)^k (1 - F_c(t))^{n - k - 1} \\
    & \hspace{4cm} \cdot (v_{k + 1} - v_k) h(t) \dd t \dd s - c_i \\
    & = \int_{t>0} \int_{0<s<t} f_{c_i}(s) \dd s \sum_{k = 0}^{n - 1} \binom{n - 1}{k} F_c(t)^k (1 - F_c(t))^{n - k - 1} \\
    & \hspace{4cm} \cdot (v_{k + 1} - v_k) h(t) \dd t - c_i \\
    & = \int_{t>0} F_{c_i}(s) \sum_{k = 0}^{n - 1} \binom{n - 1}{k} F_c(t)^k (1 - F_c(t))^{n - k - 1} \\
    & \hspace{4cm} \cdot (v_{k + 1} - v_k) h(t) \dd t - c_i.
  \end{align*}
  Since $F_{c_i}(s)$ is concave in $c_i$ for any $s$, we know that $u_i$ is also concave in $c_i$, so there is a single $c_i$ that maximizes $u_i$.
  Setting the derivative to be zero, we have
  \begin{equation}
  \begin{split}
    0 & = \frac{\dd u_i}{\dd c_i} = \int_{t>0} \frac{\dd F_{c_i}(s)}{\dd c_i} \sum_{k = 0}^{n - 1} \binom{n - 1}{k} F_c(t)^k (1 - F_c(t))^{n - k - 1} \\
    & \hspace{4cm} \cdot (v_{k + 1} - v_k) h(t) \dd t - 1. \label{eq:best_response}
  \end{split}
  \end{equation}
  \Cref{eq:best_response} describes how much effort $c_i$ agent $i$ should make in order to best response to others' effort $c$. In a symmetric equilibrium, $c_i = c$, and then \Cref{eq:best_response} become equivalent to \Cref{eq:sw}.
\end{proof}

\begin{proof}[Proof of \Cref{lemma:truthful}]
  A non-truthful report will make all later posterior beliefs wrong. By the property of strictly proper scoring rules, this will lower the expected score at each later time, hence the expected reward.
\end{proof}

\begin{proof}[Proof of \Cref{lemma:timely}]
  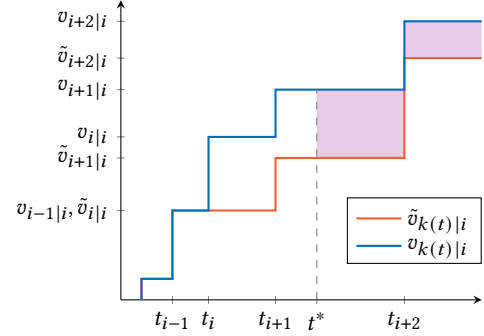
\begin{figure}[ht]
    \centering
    \begin{tikzpicture}[scale=0.7]
      \begin{axis}[
          xtick=\empty,
          ytick=\empty,
          extra x ticks = {1, 1.7, 3, 3.8, 5.5},
          extra x tick labels = {$t_{i - 1}$, $t_i$, $t_{i + 1}$, $t^*$, $t_{i + 2}$},
          extra y ticks = {2, 3, 3.4, 4.3, 4.9, 5.6},
          extra y tick labels = {{$v_{i - 1 | i}, \tilde{v}_{i | i}$}, $\tilde{v}_{i + 1 | i}$, $v_{i | i}$, $v_{i + 1 | i}$, $\tilde{v}_{i + 2 | i}$, $v_{i + 2 | i}$},
          ymin = .3,
          ymax = 6,
          xmin = 0,
          no markers,
          axis lines = middle,
          legend style = {at = {(rel axis cs: 1, 0.12)}, anchor = south east}
        ]
        \addplot+[RedOrange, thick, name path = cf] coordinates {(.4, .3) (.4, .7) (1, .7) (1, 2) (3, 2) (3, 3) (5.5, 3) (5.5, 4.9) (7, 4.9)};
        \addplot+[NavyBlue, thick, name path = a] coordinates {(.4, .3) (.4, .7) (1, .7) (1, 2) (1.7, 2) (1.7, 3.4) (3, 3.4) (3, 4.3) (5.5, 4.3) (5.5, 5.6) (7, 5.6)};
        \addplot[gray, dashed] coordinates {(3.8, 0) (3.8, 4.3)};
        \addplot[violet!20] fill between[of = cf and a, soft clip = {domain=3.8:7}];
        \legend{$\tilde{v}_{k(t) | i}$, $v_{k(t) | i}$}
      \end{axis}
    \end{tikzpicture}
    \caption{\emph{Interim} total reward received by the agents (taken expectation over \emph{future} agents' signals) for a fixed time sequence of signal discovery. The upper curve stands for the actual expected score, the lower curve stands for the counterfactual one, and the area between the two curves and to the right of $t^*$ is the expected reward (\Cref{eq:interim}) if the update is made at time $t^*$, assuming $h(t) = 1$.}
    \label{fig:intuition}
  \end{figure}
  
  Without loss of generality, we relabel the agents by the order they get their signals.
  Let $p_\cK$ with $\cK \subseteq \mathcal{N}$ be the posterior of the outcome $Y$ given the signals of a subset $\cK$ of agents. For example, $p_{\{1, 3\}} = \bP{Y \given X_1, X_3}$. Let \[v_{k | i} = \bE[X_{i + 1}, \dots, X_k, Y]{S(p_{\{1, \dots, k\}}, Y)}\] and \[\tilde{v}_{k | i} = \bE[X_{i + 1}, \dots, X_k, Y]{S(p_{\{1, \dots, i - 1, i + 1, \dots, k\}}, Y)}.\]

  Suppose agent $i$ gets a signal at time $t_i$ and wants to make an update at $t^* > t_i$. Then, his expected reward is
  \begin{align}
    & \quad \bE[X_{i+1}, \dots, X_n, Y]{V - \tilde{V} \given \text{update at } t^*} \notag \\
    & = \bE[X_{i+1}, \dots, X_n, Y]{\int_{t > t^*} (S(p(t), Y) - S(\tilde{p}(t), Y)) h(t) \dd t \given \text{update at } t^*} \notag \\
    & = \int_{t > t^*} (v_{k(t) | i} - \tilde{v}_{k(t) | i}) h(t) \dd t. \label{eq:interim}
  \end{align}

  Note that this expectation is \emph{interim}, i.e., the agent computes it upon getting a signal, knowing the current market belief but not anything in the future.

  \Cref{fig:intuition} shows this with $h(t) = 1$. The later he makes the update, the lower his expected reward is.
  On the other hand, by the property of proper scoring rules, making an update before getting a signal will lead to a non-positive expected reward.
  Thus, his best strategy is to make the update as soon as he gets his signal.
\end{proof}

Note that even if an agent did not report in a timely manner (maybe by mistake), his best strategy is still making the update immediately and truthfully.

\fi

\end{document}